\documentclass[11pt]{article}
\usepackage{lipsum}
\linespread{1.5} 
\usepackage{amssymb, amsmath, amsthm, graphicx, authblk, bbm, fullpage, soul}
\usepackage{booktabs}
\usepackage{enumitem}

\usepackage[dvipsnames]{xcolor}

\usepackage{algorithm}
\usepackage{algpseudocode}
\algnewcommand\algorithmicinput{\textbf{INPUT:}}
\algnewcommand\INPUT{\item[\algorithmicinput]}
\algnewcommand\algorithmicoutput{\textbf{OUTPUT:}}
\algnewcommand\OUTPUT{\item[\algorithmicoutput]}

\usepackage{hyperref}[]
\hypersetup{
    colorlinks=true,
    linkcolor=blue,
    filecolor=magenta,      
    urlcolor=cyan,
      citecolor=blue,
    }
\usepackage{cleveref}

\providecommand{\keywords}[1]
{
  \small	
  \textbf{\textit{Keywords---}} #1
}

\newtheorem{theorem}{Theorem}
\newtheorem{lemma}[theorem]{Lemma}
\newtheorem{proposition}[theorem]{Proposition}
\newtheorem{corollary}[theorem]{Corollary}

\newtheorem{definition}{Definition}
\newtheorem{remark}{Remark}
\newtheorem{assumption}{Assumption}

\allowdisplaybreaks

\DeclareMathOperator*{\argmin}{arg\,min}

\DeclareMathOperator*{\op}{op}
\DeclareMathOperator*{\esssup}{ess\,sup}

\DeclareMathOperator*{\cov}{cov}

\usepackage[round]{natbib}

\title{Online network change point detection \\ with missing values and temporal dependence}

\author[1]{Haotian Xu}
\author[2]{Paromita Dubey}
\author[1]{Yi Yu}
\affil[1]{Department of Statistics, University of Warwick}
\affil[2]{Department of Data Sciences and Operations, Marshall School of Business, University of Southern California}
\date{\today}

\begin{document}

\maketitle

\begin{abstract}
 In this paper we study online change point detection in dynamic networks with time-heterogeneous missing pattern within networks and dependence across both nodes and time.  The missingness probabilities, the entrywise sparsity of networks, the rank of networks and the jump size in terms of the Frobenius norm, are all allowed to vary as functions of the pre-change sample size.  On top of a thorough handling of all the model parameters, we notably allow the edges and missingness to be dependent.  To the best of our knowledge, such general framework has not been rigorously nor systematically studied before in the literature. We propose a polynomial-time change point detection algorithm, with a version of soft-impute algorithm \citep[e.g.][]{mazumder2010spectral, klopp2015matrix} as the imputation sub-routine.  Piecing up these standard sub-routines algorithms, we are able to solve a brand new problem with sharp detection delay subject to an overall Type-I error control.  Extensive numerical experiments are conducted demonstrating the outstanding performances of our proposed method in practice.	
\end{abstract}

\keywords{Online change point, network, missing data, temporal dependence}

\section{Introduction}

In recent years, statistical analysis on dynamic networks proliferates, due to the soaring interest from application areas, including neuroscience \citep[e.g.][]{bassett2017network}, biology \citep[e.g.][]{kim2014inference}, climatology \citep[e.g.][]{cross201215}, finance \citep[e.g.][]{schuenemann2020japanese}, economics \citep[e.g.][]{wu2020dependency} and cybersecurity \citep[e.g.][]{basaras2013detecting}, to name a few.  To be specific, by dynamic networks, we mean a collection of relational observations (nodes) and their relation (edges), all of which may evolve along a certain linear ordering, referred to as time.  

Based on the form of available data and different sets of assumptions, there have been different types of research on dynamic networks.  When data are random observations associated with nodes, dynamic Gaussian graphical models and/or Markov random fields models are usually deployed \citep[see e.g.][]{keshavarz2018sequential, roy2017change}.  When observations are regarding edges, Erd\H{o}s--R\'enyi models, stochastic block models and random dot product models are often summoned \citep[see e.g.][]{chi2009evolutionary, zhou2010time, athreya2017statistical, arroyo2021inference}.  In terms of dynamic mechanisms, there are three main types of assumptions: the underlying distributions (1) stay unchanged during the time course \citep[e.g.][]{bhattacharyya2018spectral}, (2) slowly evolving along time allowing for small changes \citep[e.g.][]{zhou2010time, danaher2014joint}, and (3) abruptly change at certain unknown locations or otherwise unchanged \citep[e.g.][]{liu2018global, yu2021optimal}.

When assuming the situation (3) above, the problems are referred to as change point analysis.  Dynamic network change point detection problems have been studied from different angles in \cite{keshavarz2018sequential}, \cite{liu2018global}, \cite{cribben2017estimating}, \cite{wang2021optimal} and \cite{yu2021optimal}, among others.  Change point analysis on its own is an area with rich literature, and has been studied in univariate, multivariate, matrix, functional, manifold types of data, from both \emph{online} (detecting change points while collecting data) and \emph{offline} (estimating change points retrospectively with all data already available) perspectives, concerning testing and estimating goals.  We refer to \cite{yu2020review} and \cite{aminikhanghahi2017survey} for recent reviews.

Data collected in practice often contain missing values, especially for large-scale network-type data.  The statistical research on missing data enjoys a vast body of literature, including the renowned expectation-maximisation algorithm \citep[e.g.][]{dempster1977maximum, balakrishnan2017statistical}.  Due to the fact that the network data concerned in this paper are in the form of random matrices, we focus on the matrix completion literature, where \cite{candes2010matrix}, \cite{candes2010power}, \cite{candes2009exact} and others, are among the first line of attack introducing nuclear-norm penalisation estimators, based on low-rank assumptions.  Since then, different structural assumptions, missingness patterns and penalisation methods have been introduced \citep[e.g.][]{keshavan2010matrix, cai2016matrix, wang2015orthogonal}.  Theoretical guarantees of different estimators are also investigated, ranging from estimation error bounds to minimax optimality, for both one-step and iterative estimators \cite[e.g.][]{koltchinskii2011nuclear, klopp2014noisy, klopp2015matrix, klopp2011rank, bhaskar2016probabilistic, carpentier2018adaptive}.  

Missingness has also been previously considered in the change point analysis literature.  For high-dimensional vector-valued data with missing entries, \cite{xie2012change} devised a generalised likelihood ratio based statistic to study an online change point detection problem; \cite{foll:21} was concerned with the localisation error control in an offline perspective.  For Gaussian graphical models with potential missingness, \cite{londschien2021change} studied an offline change point detection and focused only on the computational aspects.  Other more application-oriented work includes \cite{muniz2011random} and \cite{yang2020changepoint}, among others.  \cite{enikeeva2021change}, arguably, is the closest-related to us in the literature.  \cite{enikeeva2021change} was concerned with testing the existing of and estimating the location of one change point in an offline setting, on dynamic networks with missingness.  Optimal offline detection and localisation rates were investigated in \cite{enikeeva2021change}. 

On top of the missingness, different from the aforementioned literature, we also consider temporal dependence in both (1) the network generating mechanism and (2) the missingness patterns, as well as the dependence between (1) and (2).  The generality and flexibility are the first of their kind.  Having said this, modelling only the dynamic network processes with temporal dependence has been considered in literature, such as Markov chain-based dynamic network models \citep[e.g.][]{snijders2005models, ludkin2018dynamic, jiang2020autoregressive}.  We focus on modelling dynamic networks using temporally dependent latent positions, under $\phi$-mixing conditions, with an example in \Cref{sec:example}.  The introduction of latent positions allows for edge-dependence within networks, in addition to temporal dependence.

In this paper, we assume that the data are a sequence of random dot product networks (\Cref{def-rdpg}), with a heterogeneous Bernoulli sampling missingness pattern (\Cref{def-sample}).  Both the dynamic networks and the missingness are driven by a sequence of temporally dependent latent positions associated with a set of fixed nodes.  The unknown underlying distribution of the networks is assumed to change at a certain time point, namely the change point.  Our goal is to detect this change point as soon as it occurs, while controlling the probability of false alarms. 

The contributions of this paper are summarised as follows.  Firstly, this is, to the best of our knowledge, the first paper studying online network change point detection possessing missing values with rigorous theoretical justifications.  In a rather general framework, we study a change point estimator based on the soft-impute matrix completion algorithm \citep[e.g.][]{mazumder2010spectral} and derive a nearly-optimal upper bound on the detection delay, while controlling the probability of false alarms.  Secondly, we allow for the $\phi$-mixing in the latent position sequence as well as the missing patterns.  Under such generality, we present a multiple-copy version of the soft-impute algorithm and verify the theoretical guarantees of its output, allowing for different copies possessing different missingness patterns.  Lastly, extensive numerical analysis is conducted to support our theoretical finding.

\textbf{Notation.} We do not distinguish a network and its adjacency matrix.  Let $\|\cdot\|_{\infty}$, $\|\cdot\|_{\mathrm{F}}$, $\|\cdot\|_*$ and $\|\cdot\|_{\mathrm{op}}$ be the entrywise-maximum, Frobenius, nuclear and operator norms of a matrix, respectively.  Let $\|\cdot\|$ be the $\ell_2$-norm of a vector.  For any matrix $A \in \mathbb{R}^{n \times n}$ and a 0/1-matrix $B \in \{0, 1\}^{n \times n}$, let~$A_B$ and $A_{\overline{B}}$ be the sub-matrices of $A$ indexed by the one and zero entries of $B$, respectively.  For any matrix $M \in \mathbb{R}^{n \times m}$, let $M_i$ be $M$'s $i$th row, $i = 1, \ldots, n$. For two deterministic or random $\mathbb R$-valued sequences $a_n, b_n > 0$, write $a_n \gg b_n$ if $a_n/b_n \to \infty$ as $n$ diverges. Write $a_n \lesssim b_n$ if $a_n/b_n \leq C$, and write $a_n \asymp b_n$ if $c \leq a_n/b_n \leq C$, for some absolute constants $c, C > 0$ for all $n \geq 1$. Denote $\mathbb N$ and $\mathbb N^*$ the set of non-negative integers and the set of natural numbers, respectively.  Let $\odot$ be the matrix Hadamard product operator.   For any two matrices $A$ and $B$ of the same dimensions, where $B$ is a $0/1$ matrix, let $A_B = A \odot B$ denote $A_{ij}$ is observed if and only if $B_{ij} = 1$. Let $\overset{\mathrm{ind}.}{\sim}$ and $\overset{\mathrm{i.i.d.}}{\sim}$ denote the ``independently distributed'' and ``independent and identically distributed'', respectively.

\section{Problem setup}\label{sec-problem-setup}

This paper contains four key ingredients: dynamic networks, missing data analysis, temporal dependence and online change point detection. We unfold the framework concerned in this paper. 

We start by considering the marginal network model. Let $n \in \mathbb N^*$ be the number of nodes. At each time, we assume that the adjacency matrix $Y \in \{0,1\}^{n \times n}$ is marginally generated from a random dot product graph (RDPG) model. The latent positions $\{X_i\}_{i = 1}^n \subset \mathbb R^d$ are assumed to be independent and identically distributed (i.i.d.)~and follow the inner product distribution defined below. Note that the inner product distribution and RDPG refer to the marginal distributions of the latent positions and the adjacency matrix, respectively.

\begin{definition}[Inner product distribution] \label{def-inner-product-dist}
    Let $F$ be a probability distribution whose support is given by $\mathcal{X}_F \subset \mathbb{R}^d$ for $d \in \mathbb N^*$.  We say that $F$ is a $d$-dimensional inner product distribution if for all $x, y \in \mathcal{X}_F$, it holds that $x^{\top}y \in [0, 1]$.
\end{definition}

\begin{definition}[RDPG]
\label{def-rdpg}
Let $\{F_m\}_{m = 1}^M$
be a sequence of $d$-dimensional inner product distributions. Let $F = \sum_{m = 1}^M\tau_m F_m$ with weights $\{\tau_m\}_{m = 1}^M$ such that $\tau_m \geq 0$ and $\sum_{m = 1}^M\tau_m = 1$. The support of $F$ is denoted as $\mathcal{X}_F = \cup_{m = 1}^M\mathcal{X}_{F_m}$. Let $Y \in \{0, 1\}^{n \times n}$ be the adjacency matrix of an RDPG with latent positions $\{X_i\}_{i = 1}^n \overset{\mathrm{i.i.d.}}{\sim} F$, if $Y_{ij}|\{X_i, X_j\} \overset{\mathrm{ind.}}{\sim} \mathrm{Bernoulli}(P_{ij})$, with $P_{ij} = X_i^{\top}X_j$, $i \leq j$.  Denote further the associated graphon matrix as $\Theta = \{\mathbb E(P_{ij})\}_{(i,j) \in \{1,\dots,n\}^{\otimes 2}} \in [0, 1]^{n \times n}$ with rank denoted as $\mathrm{rank}(\Theta)$.
\end{definition} 

Without further assumption on the rank of $\Theta$, it follows from \Cref{def-rdpg} that $\mathrm{rank}(\Theta) \leq d$. This modelling approach based on Definitions \ref{def-inner-product-dist} and \ref{def-rdpg} is general, including stochastic block models as special cases.  We refer readers to \cite{athreya2017statistical} for a comprehensive review.  For brevity, we focus on undirected networks in this paper. As a consequence, the adjacency matrix $Y$, the graphon matrix $\Theta$ and the missingness matrix $\Omega$, the missingness probability $\Pi$, which will be introduced below, are symmetric matrices.  

In practice, large networks often contain missingness.  Formulating networks into adjacency matrices naturally leads to the territory of matrix completion analysis.  In the existing literature, different assumptions have been proposed based on missingness patterns.  Besides structural missingness patterns (for instance, \citealp{cai2016structured} assumed observing a few full rows and columns), two main types of sampling assumptions have been used in the literature: the Bernoulli sampling model (BSM, \Cref{def-sample}) and the uniform sampling at random (USR).  Given the matrix basis $\mathcal{B} = \{E_{kl} = e_k e_l^{\top}\}_{(k, l) \in \{1, \ldots, n\}^{\otimes 2}} \in \{0,1\}^{n \times n}$, where $e_k$ with $(e_k)_j = \mathbbm{1}\{k =j\}$ is the standard basis, the USR assumes that observations are i.i.d.~from~$\mathcal{B}$.  Since under the USR, the probability of observing one entry multiple times is non-zero, we resort to the BSM detailed in \Cref{def-sample} to suit the purpose of this paper.  \Cref{def-sample}, with the latent positions $\{X_i\}_{i = 1}^n$ as the inputs, is reminiscent to the graph associated sampling scheme considered in \cite{bhojanapalli2014universal} and \cite{bhaskar2016probabilistic}.

\begin{definition}[Heterogeneous Bernoulli sampling model] \label{def-sample}
Let $\Omega \in \{0, 1\}^{n \times n}$ be the Bernoulli sampling matrix of an RDPG if $\Omega_{ij} \overset{\mathrm{ind.}}{\sim} \mathrm{Bernoulli}(\Pi_{ij})$, where $\Pi_{ij}$ is the $(i,j)$th entry of the missingness probability matrix as $\Pi = \{\mathbb E(\Omega_{ij})\}_{(i,j) \in \{1, \ldots, n\}^{\otimes 2}} \in [0, 1]^{n \times n}$.
\end{definition}

\begin{remark}\label{remark:var}
    Both Definitions \ref{def-rdpg} and \ref{def-sample} are used to generate symmetric random matrix $V \in \mathbb{R}^{n \times n}$ of $0/1$ entries with a specified mean matrix $\mathbb{E}(V)$, but the statistical properties of the generated matrices are different. For \Cref{def-rdpg}, the entries are only conditionally independent given the latent positions. Unconditionally, two entries sharing the same row or column are dependent, thus the variance statistic $\|\mathbb{E}[\{V-\mathbb{E}(V)\}\{V-\mathbb{E}[
    (V)\}^{\top}]\|_{\op} \lesssim n^2$. Whereas for \Cref{def-sample}, all entries are unconditionally independent thus the variance statistic $\|\mathbb{E}[\{V-\mathbb{E}(V)\}\{V-\mathbb{E}[
    (V)\}^{\top}]\|_{\op} \lesssim n$. Detailed computation can be found in the proof of \Cref{lem-lem4-klopp_dep} in the Appendix.
\end{remark}

Moving on from marginal modelling of a static network to dynamic networks, we let $\{Y(t)\}_{t \in \mathbb N^*}\subset \mathbb \{0,1\}^{n \times n}$ and $\{\Omega(t)\}_{t \in \mathbb N^*}\subset \mathbb \{0,1\}^{n \times n}$ be sequences of adjacency and missingness matrices, respectively, where the number of nodes $n$ is assumed to be fixed across $t$. We allow temporal dependence on both $\{Y(t)\}_{t \in \mathbb N^*}$ and $\{\Omega(t)\}_{t \in \mathbb N^*}$, which are measured by its $\phi$-mixing coefficients defined below.

\begin{definition}[$\phi$-mixing coefficients]\label{def-alpha-mixing}
For a sequence of random objects $\{Z(t)\}_{t \in \mathbb{N}^*}$ and two positive integers $j$ and $\ell$, let $\mathcal{A}_{1}^j$ and $\mathcal{A}_{j+\ell}^{\infty}$ be the $\sigma$-algebras generated respectively by $\{Z(t)\}_{t = 1}^j$ and $\{Z(t)\}_{t = j+\ell}^{\infty}$.  The $\phi$-mixing coefficients of $\{Z(t)\}_{t \in \mathbb{N}^*}$ for any $\ell \in \mathbb{N}^*$ are defined as
    \[
        \phi_Z(\ell) = \sup_{j \in \mathbb{N}^*}\sup_{A \in \mathcal{A}_1^j,\, B \in \mathcal{A}_{j+\ell}^{\infty}}\big|\mathbb{P}(B|A) - \mathbb{P}(B)\big|.
    \]
\end{definition}
The $\phi$-mixing coefficients measure temporal dependence, with measurements including $\alpha$-, $\beta$-mixing coefficients \citep{bradley2005basic} and the functional dependence measure \citep{wu2005nonlinear}.  We adopt the $\phi$-mixing condition in this paper owing to its relevance to the ergodicity of Markov chains, which is a more natural choice for the dependent dynamic models that we will elaborate later. For the sequence of adjancy matrices $\{Y(t)\}_{t \in \mathbb N^*}$, its temporal dependence is induced by that of the sequence of latent positions $\{X_i(t)\}_{i = 1,\, t \in \mathbb N^*}^n$. We denote respectively the $\phi$-mixing coefficients of $\{X_i(t)\}_{t \in \mathbb N^*}$ and $\{\Omega_{ij}(t)\}_{t \in \mathbb N^*}$ as $\phi_{X_i}(\ell)$ and $\phi_{\Omega_{ij}}(\ell)$, $i, j \in \{1, \ldots, n\}$.

With the aforementioned dynamic networks, sampling models and $\phi$-mixing coefficients, we are now ready to unveil the full framework.
\begin{assumption}[Model] \label{assume-model-new}
    Let the process $\{Y_{\Omega}(t), \Omega(t)\}_{t \in \mathbb{N}^*} \subset \mathbb{R}^{n \times n}$ satisfy the following conditions.
    \begin{enumerate}[leftmargin=*,align=left, nolistsep, topsep=0pt]
        \item[a.] For each $t \in \mathbb{N}^*$, $n$ latent positions $\{X_i(t)\}_{i = 1}^n \overset{\mathrm{ind.}}{\sim} F$, where $F$ is a $d$-dimensional inner product distribution defined in \Cref{def-inner-product-dist}. 
        \item[b.] For each $t \in \mathbb{N}^*$, let $Y(t)$ be the adjacency matrix of an RDPG with latent positions $\{X_i(t)\}_{i = 1}^n$, 
        denoting its graphon matrix by $\Theta(t) = \mathbb E\{P(t)\}$ with $P_{ij}(t) = \{X_i(t)\}^{\top}X_i(t)$, defined in \Cref{def-rdpg}.  
        Assume that the graphon matrices satisfy $\sup_{t \in \mathbb N^*}\mathrm{rank}\{\Theta(t)\} \leq r$, and 
        \[
            \max_{i = 1}^n \phi_{X_{i}}(\ell) \leq C_{\phi}\rho^{\ell}, \quad \ell \in \mathbb{N}^*,
        \]
        for some constant $\rho \in (0,1)$, and 
        \[
            \sup_{t \in \mathbb{N}^*}\max_{1 \leq i\leq j \leq n} P_{ij}(t) \leq \vartheta < 1.
        \]
        \item[c.] For each $t \in \mathbb{N}^*$, let $\Omega(t)$ be the Bernoulli sampling matrix of an RDPG with the missingness probability matrix $\Pi(t) = \mathbb{E}\{\Omega(t)\}$, defined in \Cref{def-sample}.  
        Assume that 
        $\{\Pi(t)\}_{t \in \mathbb{N}^*}$ satisfies $$0 < q_1 \leq \inf_{t \in \mathbb{N}^*}\min_{1 \leq i \leq j \leq n} \Pi_{ij}(t) \leq \sup_{t \in \mathbb{N}^*}\max_{1 \leq i \leq j \leq n} \Pi_{ij}(t) \leq q_2 \leq 1,$$
        and $\{\Omega(t)\}_{t \in \mathbb{N}^*}$ satisfies
        \[
            \max_{1 \leq i \leq j \leq n}\phi_{\Omega_{ij}}(\ell) \leq C_{\phi}\rho^{\ell}, \quad \ell \in \mathbb{N}^*,
        \]
        for some constant $\rho \in (0,1)$.
        \item[d.] The latent positions $\{X_i(t)\}_{i = 1, t \in \mathbb{N}^*}^n$ and the missingness matrices $\{\Omega(t)\}_{t \in \mathbb{N}^*}$ are independent.
    \end{enumerate}
\end{assumption}

\Cref{assume-model-new}b.~assumes the exponential decay of the $\phi$-mixing coefficients of the latent positions.  The temporal dependence of $\{X_i(t)\}_{i = 1, \, t \in \mathbb{N}^*}^n$ induces the temporal dependence of $\{Y(t)\}_{t \in \mathbb N^*}$. This statement is formalised in the next lemma.

\begin{lemma}\label{theorem:markov-phi}
    Suppose that \Cref{assume-model-new} holds.  For any $(i,j) \in \{1, \ldots, n\}^{\otimes 2}$, let $\{\phi_{Y_{ij}}(\ell)\}_{\ell \in \mathbb{N}}$ denotes the $\phi$-mixing coefficient sequences of $\{Y_{ij}(t)\}_{t \in \mathbb{N}^*}$. It then holds that $\phi_{Y_{ij}}(\ell) \leq 2C \rho^{\ell}$, $\ell \in \mathbb{N}^*$, with the $\rho$ and $C$ in \Cref{assume-model-new}b.
\end{lemma}
The proof of \Cref{theorem:markov-phi}, along with all the other technical details of this paper, is collected in Appendix.  To accommodate the change points, we introduce the following two assumptions focusing on two scenarios.

\begin{assumption}[No change point] \label{assume-no-change}
Assume that $\Theta(1) = \Theta(2) = \cdots$.
\end{assumption}

\begin{assumption}[One change point] \label{assume-one-change}
Assume that there exists a positive integer $\Delta \geq 1$ such that $\Theta(1) = \cdots \Theta(\Delta) \neq \Theta(\Delta + 1) = \cdots$.
\end{assumption}

In view of Assumptions~\ref{assume-model-new}, \ref{assume-no-change} and \ref{assume-one-change}, we are concerned with detecting change points in the underlying network distributions, but not the missingness mechanisms.  In fact, we allow the missingness probability matrix $\Pi(t)$ defined in \Cref{def-sample} to be distinct at every time point.  The heterogeneity of the missingness probability matrices, undoubtedly, increases the difficulty in detecting the changes in the graphon changes.

From the online change point detection point of view, our task is to seek $\widehat{\Delta}$, an estimator of~$\Delta$, such that the following hold.  The quantities $\mathbb{P}_{\infty}(\cdot)$ and $\mathbb{E}_{\infty}(\cdot)$ indicate the probability and expectation under \Cref{assume-no-change}, with $\mathbb{P}_{\Delta}(\cdot)$ and $\mathbb{E}_{\Delta}(\cdot)$ denoting their counterparts under \Cref{assume-one-change}.

\begin{enumerate}[leftmargin=*,align=left, nolistsep, topsep=0pt]
    \item [$\bullet$] The overall false alarm control.  With a pre-specified $\alpha > 0$, it holds $\mathbb{P}_{\infty}\{\exists t \in \mathbb{N}^*: \, \widehat{\Delta} \leq t\} < \alpha$.  
    \item [$\bullet$] The detection delay control.  If $\Delta < \infty$, then $\mathbb{P}_{\Delta}\{\Delta \leq \widehat{\Delta} \leq \Delta + \varpi\} > 1 - \alpha$, where $\varpi$ is referred to as detection delay and is to be minimised. 
\end{enumerate}

We remark that in the online change point detection literature, there are two types of control.  In addition to the false alarm controls, it is also popular to lower bound $\mathbb{E}_{\infty}[\widehat{\Delta}]$ - the expectation of the change point location estimator under \Cref{assume-no-change}, see e.g.~\cite{lorden1971procedures}, \cite{lai1995sequential} and \cite{lai1998information}.  These two types of control share many similarities but with some subtle differences: controlling the overall false alarm probability is more preferable in theoretical analysis and controlling the average run length is handier in practice.  We refer readers to \cite{yu2020note} for detailed discussions on this matter and stick with controlling the overall false alarm probability in this paper.

\section{Optimal network change point detection with missingness and dependence}\label{sec-main}

Given the model described in \Cref{assume-model-new}, for the task of online change point detection, in this section, we propose a soft-impute-based change point estimator, with nearly-optimal theoretical guarantees.  The sub-routine soft-imputation is presented in \Cref{sec-soft-impute}, the change point analysis is conducted in \Cref{sec-change-point}, the fundamental limits are discussed in \Cref{sec-minimax} and a special example on a Markov chain-type dynamic network is investigated in \Cref{sec:example}.

\subsection{The soft-impute algorithm}\label{sec-soft-impute}

As for the task of matrix completion, a rich collection of estimators have been proposed, most of which are penalisation-based methods working under some form of sparsity assumptions.  The soft-impute algorithm \citep[e.g.][]{mazumder2010spectral} iteratively solves a nuclear norm penalised least squares estimator.  We present a multiple-copy version of the soft-impute algorithm, which is a variant of the soft-impute algorithm studied in \cite{klopp2015matrix}.

Given any integer pair $0 \leq s < e$, let 
    \begin{align}\label{eq-obj-function-hatM}
        \widetilde{\Theta}_{s:e} \in \argmin_{\Theta \in \mathbb{R}^{n \times n}}f\left(\Theta; \{Y_{\Omega}(i), \Omega(i)\}_{i = s+1}^e, \widehat{\Theta}, \lambda_{s, e}\right),
    \end{align}
    where $f(\Theta; \{Y_{\Omega}(i), \Omega(i)\}_{i = s+1}^e, \widehat{\Theta}, \lambda_{s, e}) = \{2(e-s)\}^{-1} \sum_{i = s+1}^e \big\|Y_{\Omega}(i) + \widehat{\Theta}_{\overline{\Omega(i)}} - \Theta\big\|^2_{\mathrm{F}} + \lambda_{s, e} \|\Theta\|_*$, $\lambda_{s, e} > 0$ is a tuning parameter and $\widehat{\Theta} \in \mathbb{R}^{n \times n}$ is an optimiser obtained from the previous step or the initialiser of the algorithm.  In fact, $\widehat{\Theta}$ can be any dimensional-compatible matrix. Then, $\widehat{\Theta}_{s:e}$ is obtained by truncated each entry of $\widetilde{\Theta}_{s:e}$ by another tuning paramter $a \in (0,1)$.


\begin{algorithm}
	\begin{algorithmic}
		\INPUT $\{Y_{\Omega}(t), \Omega(t)\}_{t = s+1, \ldots, e}$, $\lambda_{s, e}, a > 0$, $\widehat{\Theta} \in \{0,1\}^{n \times n}$
		\State $\mathrm{FLAG} \leftarrow 0$
			\While{$\mathrm{FLAG} = 0$}
				\State{
				\begin{equation}\label{eq-soft-impute}
				    \widetilde{\Theta} \leftarrow S_{\lambda_{s, e}} \bigg\{(e-s)^{-1}\sum_{t = s+1}^e  \left[Y_{\Omega}(t)  + \widehat{\Theta}_{\overline{\Omega(t)}})\right]\bigg\}
				\end{equation}}
				\If{$\left\|(e-s)^{-1} \sum_{t = s+1}^e (\widetilde{\Theta} - \widehat{\Theta})_{\overline{\Omega(t)}}\right\|_{\mathrm{op}} < \lambda_{s, e}/3$ and $\|\widetilde{\Theta} - \widehat{\Theta}\|_{\infty} < a$}
					\State $\mathrm{FLAG} \leftarrow 1$
				\EndIf
				\State $\widehat{\Theta}_{ij} \leftarrow \mathrm{sign}(\widetilde{\Theta}_{ij}) \min\{|\widetilde{\Theta}_{ij}|, \, a\}$
			\EndWhile
		\OUTPUT $\widehat{\Theta}_{s:e} = \widehat{\Theta}$
		\caption{Soft-Impute\label{alg-main-klopp}}
	\end{algorithmic}
\end{algorithm}
The soft-thresholding estimator $S_{\lambda}(\cdot)$ in \eqref{eq-soft-impute} is defined as follows.  For any matrix $W \in \mathbb{R}^{n \times n}$, let $S_{\lambda}(W) = UD_{\lambda} V^{\top}$, where $W = UDV^{\top}$ is a singular value decomposition of $W$, $D = \mathrm{diag}\{d_i, i = 1, \ldots, n\}$ is a diagonal matrix containing all singular values of $W$ including zero and $D_{\lambda} = \mathrm{diag}\{(d_i - \lambda)_+, i = 1, \ldots, n\}$, where $(\cdot)_+ = \cdot \vee 0$.

For any integer pair $0 \leq s < e$, we solve \eqref{eq-obj-function-hatM} using \Cref{alg-main-klopp}, with pre-specified $\lambda_{s,e}, a >0$.  \Cref{alg-main-klopp} terminates when the convergence criteria are met.  When $e-s = 1$, it is well-understood that \eqref{eq-soft-impute} is the minimiser of \eqref{eq-obj-function-hatM} and the iterative algorithm converges \citep{mazumder2010spectral, klopp2015matrix}.  As for the multiple-copy version that we exploit in this paper, we provide a sanity check in \Cref{lem-soft-impute-property}.  We would like to emphasise that \Cref{lem-soft-impute-property} is a deterministic result, working with any given input matrices.

\begin{lemma}\label{lem-soft-impute-property}
For any $m \in \mathbb{N}^*$, any sequence of matrices $\{R(t)\}_{t = 1}^m$ and $\lambda > 0$, the solution to the optimisation problem $\min_Z \left\{(2m)^{-1} \sum_{t = 1}^m \|R(t) - Z\|^2_{\mathrm{F}} + \lambda \|Z\|_*\right\}$
is given by $\widehat{Z} = S_{\lambda} \{m^{-1} \sum_{t = 1}^m R(t) \}$.
    
Let $\{M_k\}_{k \in \mathbb{N}}$ be the sequence of solutions produced by \Cref{alg-main-klopp}, with input indexed by $t \in \{1, \ldots, m\}$.  We have that $\max\{\|M_{k+1} - M_k\|_{\mathrm{F}}$, $\|M_{k+1} - M_k\|_{\infty}\} \to 0$ as $k \to \infty$.  For any $t \in \{1, \ldots, m\}$, $\|(M_{k+1} - M_k)_{\overline{\Omega(t)}}\|_{\mathrm{op}} \to 0$ as $k \to \infty$.
\end{lemma}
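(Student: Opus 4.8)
The plan is to establish the three assertions in order, treating the first (closed-form minimiser) as the computational engine and the remaining two (convergence of successive iterates) as consequences of a monotone-descent / limit-point argument.

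\medskip
\textbf{Step 1: the closed form for the penalised least-squares minimiser.} The key observation is that $(2m_1)^{-1}\sum_{t=1}^{m_1}\|Q(t)-Z\|_{\mathrm F}^2$ can be rewritten, by completing the square in $Z$, as $\frac{1}{2}\|Z - \overline Q\|_{\mathrm F}^2 + C$, where $\overline Q = m_1^{-1}\sum_{t=1}^{m_1}Q(t)$ and $C$ does not depend on $Z$ (it collects $\frac1{2m_1}\sum_t\|Q(t)\|_{\mathrm F}^2 - \frac12\|\overline Q\|_{\mathrm F}^2$). Hence the problem reduces to the single-matrix proximal problem $\min_Z \frac12\|Z-\overline Q\|_{\mathrm F}^2 + \lambda\|Z\|_*$, whose unique solution is the singular-value soft-thresholding operator $S_\lambda(\overline Q)$; this is the standard proximal-operator-of-the-nuclear-norm fact (e.g.\ \citealp{mazumder2010spectral}), which I would either cite or prove in one line via the von Neumann trace inequality and scalar soft-thresholding of the singular values. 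This immediately justifies the update rule \ref{eq-soft-impute}: at iteration $k$, with $\widetilde M = M_k$ fixed, the objective $f$ in \ref{eq-obj-function-hatM} with $Q(t) = \{Y(t)\}_{\Omega(t)} + (M_k)_{\overline{\Omega(t)}}$ has exactly this form, so the new iterate is $M_{k+1} = S_\lambda\big((e-s)^{-1}\sum_t [\{Y(t)\}_{\Omega(t)} + (M_k)_{\overline{\Omega(t)}}]\big)$, possibly followed by the entrywise clipping to level $a$.

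\medskip
\textbf{Step 2: monotone decrease of a surrogate objective.} Following the majorisation--minimisation view of soft-impute, I would introduce the surrogate $F(M) = \{2m_1\}^{-1}\sum_{t=1}^{m_1}\|\{Y(t)\}_{\Omega(t)} + M_{\overline{\Omega(t)}} - M\|_{\mathrm F}^2 + \lambda\|M\|_*$ — note the observed part has no residual, so this only penalises the unobserved block against itself, and $F$ equals the true penalised loss $f$ at any fixed point. The update $M_{k+1} = \argmin_Z\{(2m_1)^{-1}\sum_t\|Q_k(t)-Z\|_{\mathrm F}^2 + \lambda\|Z\|_*\}$ with $Q_k(t)$ built from $M_k$, together with the fact that $\|Q_k(t) - M_{k+1}\|_{\mathrm F}^2 \le \|Q_k(t)-M_k\|_{\mathrm F}^2 + (\text{cross terms that telescope})$, gives $F(M_{k+1}) \le F(M_k)$, and more quantitatively $F(M_k) - F(M_{k+1}) \ge c\|M_{k+1}-M_k\|_{\mathrm F}^2$ for some $c>0$ (strong convexity of the quadratic surrogate in $Z$). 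Since $F \ge 0$, the sequence $\{F(M_k)\}$ converges, so $\|M_{k+1}-M_k\|_{\mathrm F}\to 0$. The entrywise clipping step only decreases $\|M_{k+1}-M_k\|_\infty$ relative to the unclipped increment — more precisely the map $x\mapsto\sign(x)\min\{|x|,a\}$ is $1$-Lipschitz — so the clipped iterates also satisfy $\|M_{k+1}-M_k\|_{\mathrm F}\to 0$, hence $\|M_{k+1}-M_k\|_\infty\to 0$ since all norms on $\mathbb R^{n\times n}$ are equivalent for fixed $n$.

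\medskip
\textbf{Step 3: the operator-norm statement on the unobserved blocks.} For each fixed $t$, $\|(M_{k+1}-M_k)_{\overline{\Omega(t)}}\|_{\mathrm{op}} \le \|M_{k+1}-M_k\|_{\mathrm{op}} \le \|M_{k+1}-M_k\|_{\mathrm F}\to 0$, using that restricting to a submatrix does not increase the operator norm and that the operator norm is dominated by the Frobenius norm. This is the easy part once Step 2 is in hand.

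\medskip
\textbf{Main obstacle.} The delicate point is Step 2: the objective $f$ in \ref{eq-obj-function-hatM} is not the function being monotonically decreased — what decreases is the surrogate $F$ in which the ``data'' on the unobserved block is itself updated each round — so one must carefully identify the right Lyapunov function and verify that the cross terms in expanding $\|Q_k(t)-M_{k+1}\|_{\mathrm F}^2$ combine to give genuine descent (this is the multi-copy analogue of the single-matrix argument in \citealp{mazumder2010spectral}). A secondary subtlety is handling the clipping step: it is a nonexpansive projection onto the $\ell_\infty$-ball of radius $a$, but it acts \emph{after} the soft-thresholding, so one should check it does not break the descent — which it does not, because projecting both $M_{k+1}$-candidates toward a common convex set can only shrink their distance, and for the purpose of the claimed conclusion (the increments vanishing) a one-sided Lipschitz bound suffices. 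I would not expect difficulty beyond bookkeeping in Steps 1 and 3.
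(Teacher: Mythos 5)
Your Step 1 is correct and in fact a little cleaner than the paper's route: the paper expands the objective through the SVD of $Z$ and solves the resulting separable problem in the singular values, whereas completing the square reduces the multi-copy problem directly to the standard single-matrix proximal problem for $\overline{Q}=m_1^{-1}\sum_{t=1}^{m_1}Q(t)$, whose solution is $S_\lambda(\overline{Q})$. Step 3 is the same elementary norm comparison the paper uses, so once $\|M_{k+1}-M_k\|_{\mathrm{F}}\to 0$ is established everything else follows.

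The genuine gap is in Step 2, and it sits exactly at the point you dismiss as a ``secondary subtlety''. Your Lyapunov function is the single-argument $F(M)=(2m_1)^{-1}\sum_{t}\|(Y(t)-M)_{\Omega(t)}\|_{\mathrm{F}}^2+\lambda\|M\|_*$, and the descent inequality $F(M_k)-F(M_{k+1})\ge c\|M_{k+1}-M_k\|_{\mathrm{F}}^2$ is only available when $M_{k+1}$ is the \emph{unclipped} minimiser $M_{k+1}^*$ of the surrogate (there it does hold, by majorisation plus strong convexity of the quadratic part). The algorithm then replaces $M_{k+1}^*$ by its entrywise truncation, and $F$ is \emph{not} monotone under truncation: clipping an entry to $[-a,a]$ with $a<1$ can move it away from an observed $Y_{ij}(t)\in\{0,1\}$, and entrywise clipping can increase the nuclear norm. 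Your justification --- that clipping is a nonexpansive projection onto the $\ell_\infty$-ball --- controls distances between iterates, not values of $F$, so the telescoping sum $\sum_k\{F(M_k)-F(M_{k+1}^*)\}$ need not be finite and the sufficient-decrease argument does not close. The paper's fix is to take as Lyapunov function the two-argument surrogate $Q(A,B)=\tfrac{1}{2m_1}\sum_t\|(Y(t)-B)_{\Omega(t)}\|_{\mathrm{F}}^2+\tfrac{1}{2m_1}\sum_t\|(A-B)_{\overline{\Omega(t)}}\|_{\mathrm{F}}^2+\lambda\|B\|_*$ evaluated at the pair (clipped, unclipped), i.e.\ $Q(M_k,M_k^*)$: the chain $Q(M_k,M_k^*)\ge Q(M_k,M_{k+1}^*)\ge Q(M_{k+1},M_{k+1}^*)$ holds because the clipped iterate enters $Q$ only through the $\overline{\Omega(t)}$ term, where the projection property $\|(M_{k+1}-M_{k+1}^*)_{\overline{\Omega(t)}}\|_{\mathrm{F}}\le\|(M_k-M_{k+1}^*)_{\overline{\Omega(t)}}\|_{\mathrm{F}}$ applies. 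That argument only yields that the increments vanish on the unobserved blocks, which is why the paper additionally invokes the nonexpansiveness of $S_\lambda$ and of the truncation to get $\|M_{k+1}-M_k\|_{\mathrm{F}}\le\|m_1^{-1}\sum_t(M_k-M_{k-1})_{\overline{\Omega(t)}}\|_{\mathrm{F}}\le\|M_k-M_{k-1}\|_{\mathrm{F}}$ and transfer the conclusion to the full Frobenius norm --- a second ingredient absent from your sketch that you would also need once the clipping issue is repaired.
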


\subsection{The change point detection algorithm}\label{sec-change-point}

With the soft-impute estimator in hand, we define the change point estimator as
    \begin{equation}\label{eq-widehat-delt-defi}
        \widehat{\Delta} = \min\left\{t \geq 2: \, \max_{s = 1}^{t-1} (\widehat{D}_{s, t} - \varepsilon_{s, t}) \geq 0\right\},
    \end{equation}
    where $\widehat{D}_{s, t} = \|\widehat{\Theta}_{0:s} - \widehat{\Theta}_{s:t}\|_{\mathrm{F}}$ for any pair of positive integers $s, t$, $t \geq 2$ and $\{\varepsilon_{s, t}\}$ is a sequence of pre-specified tuning parameters.  The search of the change point is terminated once a change point is detected, or there is no more new data point observed.  

    \begin{remark}[Computational complexity]
        To obtain our change point estimator $\widehat{\Delta}$, we need a sequence of matrix estimators $\{\widehat{\Theta}_{s:t}\}$, output by \Cref{alg-main-klopp}.  The computational cost includes that of conducting a truncated singular value decomposition and reconstructing a low-rank matrix, and depends on the iterations, warm-start choices, etc.  We omit the details here and refer readers to Section~5 in \cite{mazumder2010spectral}.  On top of the matrix estimation, we also have the change point analysis procedure.  In \eqref{eq-widehat-delt-defi}, for notational simplicity, we present a rather expensive procedure exploiting all integer pairs $s < t$.  In fact, without any loss of statistical accuracy, one can instead exploit a dyadic grid $s \in \{(t - 2^j) \vee 1\}_{j \in \mathbb{N}^*}$, for any $t \geq 2$, see \cite{yu2020note} for detailed discussions.  This means, for every newly collected time point $t \geq 2$, the additional cost of estimating the change point is of order $O\{\log(t) \mathcal{Q}\}$, where $\mathcal{Q}$ is the cost of \Cref{alg-main-klopp}.  We comment that the logarithmic cost of online change point detection cannot be improved, without knowing the exact pre-change distribution.
    \end{remark}

    To understand the theoretical performances of our change point estimator $\widehat{\Delta}$, we require the following signal-to-noise ratio condition.

\begin{assumption}[Signal-to-noise ratio] \label{assume-snr}
Under \Cref{assume-one-change}, let $\kappa = \|\Theta(\Delta) - \Theta(\Delta+1)\|_{\mathrm{F}}$.  For a pre-specified $\alpha \in (0, 1)$, it satisfies that 
\begin{align*}
    \kappa^2 \Delta \geq C_{\mathrm{SNR}}\frac{r\vartheta n\log(n)\log(\Delta/\alpha)}{q_1^{2}}\max\{1, nq_2^2\},
\end{align*}
where $C_{\mathrm{SNR}} > 0$ is a large enough absolute constant.
\end{assumption}

We refer to \Cref{assume-snr} as the signal-to-noise ratio condition, in the sense that the signal strength is completely characterised by the jump size $\kappa$ and the pre-change sample size $\Delta$, and the noise level is quantified by the network size $n$, network entrywise-sparsity parameter $\vartheta $, network low-rank sparsity $r$ and the strength of missingness $q_1, q_2$.  When $q_1 = q_2 =1$, i.e.~without missingness, up to logarithmic factors, \Cref{assume-snr} has the same form as what is required for a polynomial-time algorithm to yield a nearly-optimal detection delay, in an online network change point detection problem without missing entries \citep{yu2021optimal}.

\begin{theorem}\label{thm-upper-bound}
Consider the model described in \Cref{assume-model-new}.  Let $\alpha \in (0, 1)$ and $\widehat{\Delta}$ be defined in~\eqref{eq-widehat-delt-defi}, with the inputs of \Cref{alg-main-klopp} being $a = \vartheta$, for any $s, t \in \mathbb N$, $s < t$, the penalisation levels being
\begin{align*} 
\lambda_{s,e} = C_{\lambda}\sqrt{\frac{\log(n) + \log(1/\alpha)}{e-s}}\sqrt{\vartheta}\left(\sqrt{n} + q_2n\right),
 \end{align*}
and with the change point thresholds being 
    \begin{align*}
        \varepsilon_{s, t} = 
        C_{\varepsilon} \frac{\sqrt{ r\vartheta n\log(n)\max\{1, nq_2^2\}} }{q_1}\left(\sqrt{\frac{\log(s/\alpha)}{s}} + \sqrt{\frac{\log(t/\alpha)}{t-s}}\right)
    \end{align*}
    where $C_{\lambda}, C_{\varepsilon} > 0$ are sufficiently large absolute constants.\\
If \Cref{assume-no-change} holds, then we have that $\mathbb{P}\{\exists t \in \mathbb{N}^*: \, \widehat{\Delta} \leq t\} \leq \alpha$. 
If Assumptions~\ref{assume-one-change} and \ref{assume-snr} hold, then we have that $\mathbb{P}\left\{\Delta \leq \widehat{\Delta} \leq \Delta + \varpi \right\} \geq 1 - \alpha$, where  
\begin{align*}
    \varpi = C_{\varpi}  \frac{r\vartheta n\log(n)\log(\Delta/\alpha)}{q_1^{2}\kappa^2}\max\{1, nq_2^2\}
\end{align*}
and $C_d > 0$ is a sufficiently large absolute constant.
\end{theorem}

\begin{remark}[The hierarchy of constants.]
    When there exists a change point, i.e.~under Assumption~\ref{assume-one-change}, we can see from \Cref{thm-upper-bound} that there are four absolute constants $C_{\mathrm{SNR}}$, $C_{\lambda}$, $C_{\varepsilon}$ and $C_{\varpi}$.  We do not claim optimality for these constants, but only show here that the feasible choices of these constants do not form an empty set.  One first needs to establish an absolute constant $C_{\mathrm{noise}} > 0$ appearing in \Cref{cor-main-klopp-cor-3}.  The choice of $C_{\lambda}$ solely depends on $C_{\mathrm{noise}}$, $C_{\varepsilon}$ only depends on $C_{\lambda}$, and then $C_{\varpi}$ only depends on $C_{\varepsilon}$.  Once $C_{\varepsilon}$ and $C_{\varpi}$ are chosen, one only needs to ensure that $C_{\mathrm{SNR}}$ is large enough -- larger than some increasing functions of $C_{\varepsilon}$ and $C_{\varpi}$. 
\end{remark}

It is shown in \Cref{thm-upper-bound} that, with the overall false alarm probability upper bounded by pre-specified $\alpha$, if a change point does occur, our procedure will, with probability at least $1 - \alpha$, detect the change point with a detection delay upper bounded by 
    \begin{equation}\label{eq-detection-delay-rate}
        C_\varpi  \frac{r\vartheta n\log(n)\log(\Delta/\alpha)}{q_1^{2}\kappa^2}\max\{1, nq_2^2\}.
    \end{equation}
The derivation of \Cref{thm-upper-bound} relies on two key ingredients: an estimation error bound on imputing a low-rank matrix and an online change point analysis procedure.  To further understand \Cref{thm-upper-bound}, we compare our result with \cite{klopp2015matrix} and \cite{yu2021optimal}.  Before delving into details, we highlight the dependence we allow.  Compared to \cite{klopp2015matrix}, we allow for the dependence within network edges and missingess, as well as the dependence between the presences and missingness of edges.  This is materalised by the interconnection via random latent positions.  Compared to \cite{yu2021optimal}, we allow for the temporal dependence across time and this is formalised by the $\phi$-mixing coefficients.

\textbf{Estimation error analysis.} Our estimation error analysis is built upon the analysis in \cite{klopp2015matrix}, which presents an estimation error bound of a soft-impute estimator based on only one incomplete matrix.  To deal with the dynamic networks scenario, we derive a multiple-copy version of Corollary~3 in \cite{klopp2015matrix} in \Cref{cor-main-klopp-cor-3}, under temporal dependence and network dependence. We provide detailed analysis in \Cref{remark:error_analy}

\begin{proposition} \label{cor-main-klopp-cor-3}
Assume that Assumptions~\ref{assume-model-new} and \ref{assume-no-change} hold.  Denote $\Theta(t) = \Theta_0$, $t \in \mathbb{N}^*$, with $\mathrm{rank}(\Theta_0) \leq r$.  For any integer pair $0 \leq s < e$, let $\widehat{\Theta}_{s:e}$ be the output of \Cref{alg-main-klopp} with $a = \vartheta$ and 
\begin{align*}
\lambda_{s,e} = C_{\lambda}\sqrt{\frac{\log(n) + \log(1/\delta)}{e-s}}\sqrt{\vartheta}\left(\sqrt{n} + q_2n\right).
\end{align*}
For any interval $(s,e]$, such that the interval length $(e-s)$ satisfying \begin{align}\label{eq:interval_leng}
    \frac{\sqrt{e-s}}{\log^2(e-s)} \geq C\sqrt{n\log(n)}.
\end{align}
Then with probability at least $1 - 2\delta$, it holds that 	\begin{align}\label{eq-estimation-error-bounds}
	\|\widehat{\Theta}_{s:e} - \Theta_0\|_{\mathrm{F}}^2 \leq \frac{C_{\mathrm{noise}}\vartheta rn\log(n)\log(1/\delta)}{q_1^2(e-s)}\max\left\{1, q_2^2n\right\},
	\end{align}
	where $C_{\lambda}, C, C_{\mathrm{noise}} > 0$ are absolute constants.
\end{proposition}

\begin{remark}\label{remark:error_analy}
 \ \\
 (i) \textbf{Cost of within-network dependence.} The factor $n \max\{1, q_2^2n\} = \max\{n, q_2^2 n^2\}$ in \eqref{eq-estimation-error-bounds} contains two terms reflecting the variation of $\Omega(t)$ and $\Omega(t)\odot\{Y(t) - \Theta(t)\}$, respectively, as discussed in \Cref{remark:var}. Given the within network dependence of $Y(t)$ induced by the latent positions, the term $q_2^2n^2$ is unavoidable and can be understood as the cost of the model generality.  \cite{yu2021optimal} consider the inhomogeneous Bernoulli networks forcing the upper diagonal entries of $Y(t)$ to be unconditionally independent, which reduces $n^2$ to $n$.
 
\noindent (ii) \textbf{Cost of temporal dependence.} We also allow for the temporal dependence in both $\{Y(t)\}_{t \in \mathbb{N}^*}$ and $\{\Omega(t)\}_{t \in \mathbb{N}^*}$. The cost of temporal dependence is reflected in the term $\log^2(e-s)$ in \eqref{eq:interval_leng}, which follows from an application of a variant of matrix Bernstein inequality under temporal dependence (see \Cref{lemma:matrix_bernstein}). 

\noindent (iii) \textbf{Cost of missingness.} This is reflected by the inflation factor $q_1^{-2}$. When there is no missingness, we have $q_1 = q_2 = 1$. Compared to \cite{klopp2015matrix}, since we deal with multiple copies of input matrices, the term $(e-s)^{-1}$  incurrs.
\end{remark}

\textbf{Online change point analysis.}  \cite{yu2021optimal} studies an optimal online network change point detection without missingness and under the inhomogeneous Bernoulli networks.  To compare with \cite{yu2021optimal}, we note that when $q_1 = q_2 = 1$ and the upper diagonal entries of $Y(t)$ are conditionally independent, \eqref{eq-detection-delay-rate} corresponds to the detection delay rate without missingness that
    \begin{equation}\label{eq-detection-delay-without-missingness}
        rn\vartheta\log(\Delta/\alpha) \kappa^{-2}.
    \end{equation}
    It is shown in \cite{yu2021optimal} that, without missingness, a minimax lower bound on the detection delay is of order $\max\{r^2/n, \, 1\} n \vartheta \log(1/\alpha)  \kappa^{-2}$, which can be obtained, off by a logarithmic factor, by an NP-hard algorithm.  As for polynomial-time algorithms, \cite{yu2021optimal} shows that without missingness, the detection delay is exactly that in \eqref{eq-detection-delay-without-missingness}.


\textbf{Choices of tuning parameters.} The tuning parameters involved in the procedure are $\lambda_{s, e}$ - the penalisation level and the convergence indicator when calculating $\widehat{\Theta}_{s, e}$ using \Cref{alg-main-klopp}, $a$ - the truncation parameter used in \Cref{alg-main-klopp} tailored for the sparsity of networks and $\{\varepsilon_{s, t}\}$ - the sequence of change point thresholds.  To guarantee the theoretical results, all these tuning parameters involve unknown model parameters.  The practical guidance is left to \Cref{sec-numerical}.   We are to explain their theoretical choices here.  

As for $\lambda_{s, e}$, it is chosen to be $\lambda_{s,e} \asymp \|\frac{1}{e-s} \sum_{t = s+1}^e \Omega_{ij}(t)\{Y_{ij}(t) - \Theta_{ij}(t)\} E_{ij}\|_{\mathrm{op}}$,
where $E_{ij} = e_ie_j^{\top}$ is a matrix basis.
That is to say, $\lambda_{s, e}$ is chosen to overcome the noise matrix with observed entries.  We do not have guarantees if there is underestimation on $\lambda_{s,e}$, but an over-estimated $\lambda_1 \geq \lambda_{s, e}$ still leads to sufficient guarantees by inflating estimation error bound \eqref{eq-estimation-error-bounds} that 
\[
\|\widehat{\Theta}_{s:e} - \Theta_0 \|^2_{\mathrm{F}} \lesssim \frac{\lambda^2_1 \vee \left\{r\vartheta n \log(n) \log(1/\delta)\max\left\{1, nq_2^2\right\}\right\}}{q_1^{2} (e-s)},
\]
which, with a correspondingly adjusted threshold, directly implies the same false alarm guarantees and an inflated detection delay.  The truncating parameter $a$ is chosen to be the same as the entrywise-sparsity level of the underlying networks.  We do not have theoretical guarantees when $a$ is under-estimated, but when it is over-estimated (in fact, $a \leq 1$ always holds), with a correspondingly adjusted threshold, directly implies the same false alarm guarantees and an inflated detection delay.  The change point thresholds are set to be the minimum order to cast $\max_{s, t} \|\widehat{\Theta}_{0:s} - \widehat{\Theta}_{s:t}\|_{\mathrm{F}}$.  Under- and over-estimation of the change point thresholds leads to a distortion on the overall false alarm probability and detection delay controls, respectively.

\subsection{The fundamental limits}\label{sec-minimax}

It is studied in \cite{yu2021optimal} that without missingness or any form of temporal and within network dependence, a minimax lower bound on the detection delay is of the form
\begin{align}\label{eq-minimiax-lower-bound}
    \max\{r^2/n, \, 1\} n \vartheta \log(1/\alpha)  \kappa^{-2}.
\end{align}
    The lower bound \eqref{eq-minimiax-lower-bound}, up to a logarithmic factor, can be matched by an NP-hard algorithm with the upper bound $\log(\Delta/\alpha) \kappa^{-2}n\vartheta\max\{r^2/n, \, \log(r)\}$.  

With the presence of missingness, it is equivalent to consider the variance of each entry is inflated from $\vartheta $ to $\vartheta/q_1^2$, where $q_1 \in [0,1]$ denotes the strength of non-missingness and where we assume a homogeneous missingness pattern for simplicity.  This is exactly the same treatment when constructing the minimax lower bound in the matrix completion literature \citep[e.g.][]{koltchinskii2011nuclear}. 

With the presence of within network dependence of $Y(t)$, the variance statistic is inflated from $n$ to $q_2^2n^2$ (see the discussion in Remarks \ref{remark:var} and \ref{remark:error_analy}). 

We therefore obtain a minimax lower bound on the detection delay based on \eqref{eq-minimiax-lower-bound} and replacing $\vartheta $ with $\vartheta  /q_1^2$ and $n$ with $q_2^2n^2$, i.e.
\begin{equation}\label{eq-lb-rate}
    \log(1/\alpha) \max\{r^2/(q_2^2n^2), \, 1\} \kappa^{-2}q_2^2n^2 \vartheta q_1^{-2}.
\end{equation}
Under the homogeneous missingness assumption and RDPG model, the detection delay in \Cref{thm-upper-bound}, which is based on a polynomial-time algorithm, is of order
\begin{equation}\label{eq-ub-rate}
    \log(\Delta/\alpha) r\kappa^{-2} q_2^2n^2\vartheta q_1^{-2}.
\end{equation}
Comparing \eqref{eq-lb-rate} and \eqref{eq-ub-rate}, we see that when $r \asymp 1$, i.e.~in a low-rank regime, up to a logarithmic factor, the detection delay in \Cref{thm-upper-bound} is optimal.  

\subsection{An example: A Markov chain-based RDPG model}\label{sec:example}
We conclude this section with some further justification of the $\phi$-mixing conditions we imposed in \Cref{assume-model-new}.  The sticky dynamic network process studied in \cite{padilla2019change} is defined based on RDPG, with the latent position $X_i(t) \in \mathbb{R}^d$ satisfying that
\begin{equation}\label{eq:markov_latentpostion}
    X_i(t) \begin{cases}
        = X_i(t-1), & \text{with probability $\rho$},\\
        \overset{\mathrm{ind.}}{\sim} F, & \text{with probability $1-\rho$},
    \end{cases}
\end{equation}
where $\rho \in [0,1)$ and $F$ is an inner product distribution on $\mathbb{R}^d$ defined in \Cref{def-inner-product-dist}.  The latent positions process given in \eqref{eq:markov_latentpostion} is in fact a Markov chain and satisfied \Cref{assume-model-new}b, as suggested below.
\begin{lemma}\label{lemma:ergodic}
    For any $i \in \{1, \ldots, n\}$, the latent position sequence $\{X_{i}(t)\}_{t \in \mathbb{N}^*}$, given in \eqref{eq:markov_latentpostion} with $\rho \in [0,1)$, is a geometric ergodic Markov chain, and its $\phi$-mixing coefficients satisfy $\phi_{X}(\ell) \leq \rho^{\ell}$, i.e.~$\phi_{X}(\ell)$ decays exponentially in $\ell$. 
\end{lemma}

With \Cref{lemma:ergodic} at hand, we see that the model studied in \cite{padilla2019change} is a special case of the model we considered in this paper.

\section{Numerical experiments} \label{sec-numerical}    

We now illustrate the numerical efficacy of our proposed online change point detection estimator.  Our proposed methods are implemented in the R package \texttt{changepoints} \citep{changepoints_R}.

\subsection{Synthetic data analysis}\label{sec-simulation}

\medskip
\noindent \textbf{Competitors.}  Three competitors are considered.  

\textbf{1}. Instead of using the soft-impute algorithm for matrix completion, we consider the Rank-Constrained Maximum Likelihood Estimator (RC) \citep{cai:2013,bhaskar2016probabilistic}.  To be specific, we replace $\widehat{D}_{s,t}$ in \eqref{eq-widehat-delt-defi} with $\widehat{D}_{s, t}^{\mathrm{MLE}} = \|\widehat{\Theta}_{0:s}^{\mathrm{MLE}} - \widehat{\Theta}_{s:t}^{\mathrm{MLE}}\|_{\mathrm{F}}$, where for $e > s$, 
\[
    \widehat{\Theta}_{s:e}^{\mathrm{MLE}} = \argmin_{\|\Theta\|_{\infty} \leq a, \mathrm{rank}(\Theta) \leq r} F_{s:e}(\Theta), 
\]
with $a > 0$, $r \in \mathbb{N}$ and $F_{s:e}(\Theta)$ being the negative log-likelihood of $\{Y_{\Omega}(t), \Omega(t)\}_{s+1}^e$.  We compute~$\widehat{\Theta}_{s:e}^{\mathrm{MLE}}$ using the approximate projected gradient method \citep[see Algorithm 1 in ][]{bhaskar2016probabilistic}, with the rank constraint $r$ the dimension of the matrix as done in \cite{bhaskar2016probabilistic} and the sup-norm constraint $a = 0.99$. The step size $\tau$ of the approximate projected gradient method is selected via a backtracking line search.

\textbf{2.} Instead of using the $\widehat{D}_{s,t}$-type scan-statistics to detect change points, we deploy the $k$-nearest neighbour ($k$-NN) based procedure \citep{chu:18,chen:19}, and still call \Cref{alg-main-klopp} for matrix completion.  Three types of statistics are considered in the R package \texttt{gStream} \citep{chen2019package}: the original edge-count scan statistic (ORI), the weighted edge-count scan statistic (W) and the generalised edge-count scan statistic (G).  For fair comparisons, we calibrate the  thresholds thereof on training data so that the false alarm rate is controlled at the level $\alpha$ instead of the average run length as proposed in \cite{chen:19}.  In the simulations we set $k=1$.

\textbf{3.} We consider an online adaptation of the MissInspect change point estimator \citep{foll:21}, which is applied to the vectorised networks with missing elements.  We remark that this might not be a fair comparison due to the different structural assumptions.

\medskip
\noindent \textbf{Tuning parameters.}  To choose $\{\lambda_{s,t}, \varepsilon_{s, t}\}$, each experiment comprises of a calibration step carried out on training data $\{Y_{\Omega}(t), \Omega(t)\}_{t = 1}^{T_{\mathrm{train}}}$, $T_{\mathrm{train}}=200$.  The training data do not possess change points.  

To be specific, for $\{\lambda_{s,t}\}$ we set the sparsity parameter $\vartheta $ of the adjacency matrices as the $95\%$ quantile of $\{\widehat{\Theta}_{0:T_{\mathrm{train}}, ij}\}_{i,j}$, $a = 1$ in \Cref{alg-main-klopp}, the maximum (minimum) missingness probability $q_2$ ($q_1$) as the $95\%$ ($5\%$) quantile of $\{T^{-1}_{\mathrm{train}}\sum_{t = 1}^{T_{\mathrm{train}}}\Omega(t)\}_{i,j}$, the rank $r$ as the rank of $\widehat{M}_{0:T_{\mathrm{train}}}$ and $C_\lambda=2/3$ in all numerical experiments. The robustness of the choice of $C_\lambda$ is investigated in an additional simulation provided in \Cref{sec-addition}.

As for $\{\varepsilon_{s,t}\}$, we randomly permute the training data $K = 100$ times with respective to $t$, and compute the $K$ replicates of the CUSUM statistic in \eqref{eq-widehat-delt-defi}, where for each $s$ and $t$, $\Vert\widehat{D}_{s,t}^{(k)}\Vert_{\mathrm{F}}$ involves only the $k$th permutation, $k \in \{1, \ldots, K\}$.  We choose $C_{\varepsilon}$ such that the proportion of $\{\Vert\widehat{D}_{s,t}^{(k)}\Vert_{\mathrm{F}}\}_{k=1}^K$ which cross the detection threshold is capped at $\alpha$. 

The detection thresholds for the competitors are also calibrated in a similar manner using the training data.  

\medskip
\noindent \textbf{Evaluation measurements.}  Two metrics are considered: the average detection delay 
\[
    \mathrm{Delay} = \left(\sum_{j = 1}^N\mathbbm{1}\{\tilde{\Delta} \geq \Delta\}\right)^{-1} {\sum_{j = 1}^N (\tilde{\Delta} - \Delta)\mathbbm{1}\{\tilde{\Delta} \geq \Delta\}}
\]
and the proportion of false alarms 
\[
    \mathrm{PFA} = N^{-1} \sum_{j = 1}^N\mathbbm{1}\{\tilde{\Delta} < \Delta\},
\]
where $N = 100$ represents the number of repetitions we conduct in each setting, $\tilde{\Delta}$ is the minimum of the total time length and the estimated change points to account for the situations when no change point is detected.




\medskip
\noindent \textbf{Change points.}  We generate a sequence of network data $\{Y_{\Omega}(t), \Omega(t)\}_{t = 1}^T$ each time, with a single change point at $\Delta = 150$ and the total length $T = 300$.  If no change point is detected before $T$, then we set $\widehat{\Delta}=\infty$ and  $\tilde{\Delta}=\min\{T, \widehat{\Delta}\}$.

\medskip
\noindent \textbf{Simulation settings.}

\textbf{Scenario 1: SBM.} Let the data be from SBMs with $n = 100$ and three equally-sized communities.  Let $z_i \in \{1, 2, 3\}$ denote the community membership label of node $i$, $i \in \{1, 2, \dots, n\}$. Let the $(i,j)$th entry of the underlying graphon matrix $\Theta(t)$ be $\Theta_{ij}(t) = \vartheta  B_{z_iz_j}(t)$, $i,j \in \{1,2, \dots, n\}$, where $\vartheta  = 0.5$ and $B(t)$'s are
\[
 B(t) = 
\begin{cases}
  \begin{pmatrix}
            0.6 & 1.0 & 0.6\\
            1.0 & 0.6 & 0.5\\
            0.6 & 0.5 & 0.6
            \end{pmatrix}, &  t \in \{1,2, \dots, \Delta\}, \\    
   \begin{pmatrix}
            0.6 & 0.5 & 0.6\\
            0.5 & 0.6 & 1.0\\
            0.6 & 1.0 & 0.6
            \end{pmatrix}, & 
            t \in \{\Delta+1, \dots, T\}. 
\end{cases}
\]
For $t \in \{1, 2, \dots, T\}$, we first generate the unobserved, symmetric adjacency matrices $\{Y(t)\}_{t = 1}^T$ as
\begin{align*}
    Y_{ij}(t) \overset{\mathrm{ind.}}{\sim} \mathrm{Bernoulli}(\Theta_{ij}(t)), \quad  i \leq j,
\end{align*}
followed by the missingness pattern matrices $\{\Omega(t)\}_{t = 1}^T$ as
\begin{align*}
    \Omega_{ij}(t) \overset{\mathrm{ind.}}{\sim} \mathrm{Bernoulli}(\pi), \quad i \leq j,
\end{align*}
where $\pi = q_1 = q_2$ controls the overall proportion of missing entries. 

We let the observed, symmetric adjacency matrices with missing values $\{Y_{\Omega}(t)\}_{t=1}^T$ be
\begin{align*}
    Y_{\Omega,\, ij}(t) = \Omega_{ij}(t) Y_{ij}(t), \quad i \leq j.
\end{align*}

In this scenario, we consider undirected graphs with self-loops, which implies low-rank of the graphon matrices $\{\Theta(t)\}_{t = 1}^T$, i.e.~$r = 3$ across all $t$.

\textbf{Scenario 2: temporally independent RDPG.}   To generate data from RDPGs, we let $n = 100$ and the latent positions be $X$ for $t \in \{1, \ldots, \Delta\}$ and $\widetilde X$ for $\{\Delta+1, \ldots, T\}$. Suppose $X, X^{\prime} \in \mathbb{R}^{n \times 5}$ are generated from $X_{ij}, X^{\prime}_{ij} \overset{\mathrm{i.i.d.}}{\sim} \mathrm{Unif}[0,1]$, $i \in \{1, 2, \dots, n\}$, $j \in \{1, 2, \dots, 5\}$. Then
for $i \in \{1, 2, \dots, n\}$,
\begin{align*}
    \widetilde{X}_i = \begin{cases}
    X^{\prime}_i, & i \leq \lfloor n/4 \rfloor,\\
    X_i, & \mbox{otherwise}.
\end{cases}
\end{align*}
The latent positions $X$ and $\widetilde{X}$ are fixed throughout.
For $i < j$, the $(i,j)$th entries of the unobserved, symmetric adjacency matrices $\{Y_{ij}(t)\}_{t=1}^T$ are generated independently as
\begin{align*}
    Y_{ij}(t) \sim \begin{cases}
      \mathrm{Bernoulli}\bigg(\frac{X_i^{\top}X_j}{\Vert X_i \Vert \Vert X_j \Vert}\bigg), & t \in \{1, \ldots, \Delta\},\\  
      \mathrm{Bernoulli}\bigg(\frac{\widetilde X_i^{\top}\widetilde X_j}{\Vert \widetilde X_i \Vert \Vert \widetilde X_j \Vert}\bigg), & t \in \{\Delta+1,  \ldots, T\}.
    \end{cases} 
\end{align*}
The matrices $\{\Omega(t)\}_{t = 1}^T$ are generated in the same way as in \textbf{Scenario 1} and we fix $\pi = q_1 = q_2 = 0.9$.  Different from \textbf{Scenario 1}, we consider symmetric adjacency matrices without self-loop, a more challenging setting as this does not guarantee a low-rank graphon.  The observed, symmetric adjacency matrices with missing values $\{Y_{\Omega}(t)\}_{t=1}^T$ are generated as $Y_{\Omega, \,ij}(t) = \Omega_{ij}(t) A_{ij}(t)$, $i < j$.

\textbf{Scenario 3: temporally dependent RDPG.}  Let $n = 100$, we generate $n$ latent positions $X_i(1) \sim \mathcal{N}(0, I_5)$, and for $t \in \{2, \dots, \Delta\}$,
\begin{align*}
    X_i(t) \begin{cases}
    = X_i(t-1), & \text{with probability } 0.5,\\
    \overset{\mathrm{ind.}}{\sim} \mathcal{N}(0, I_5), & \text{with probability } 0.5.
\end{cases}
\end{align*}
Let
\begin{align*}
    P_{ij}(t) = \frac{\exp\{X_i(t)^{\top}X_i(t)\}}{1+\exp\{X_i(t)^{\top}X_i(t)\}} \;\; \text{and} \;\; Q_{ij}(t) = \max\left\{q_1, \frac{\exp\{X_i(t)^{\top}X_i(t)\}}{1+\exp\{X_i(t)^{\top}X_i(t)\}}\right\},
\end{align*}
where $q_1 \in \{0.7, 0.8\}$.
After the change point, we generate $n$ latent positions $X_i(\Delta+1) \sim \mathrm{Unif}[0,1]^{\otimes 5}$, and for $t \in \{\Delta+2, \dots, T\}$
\begin{align*}
    X_i(t) \begin{cases}
    = X_i(t-1), & \text{with probability } 0.5,\\
    \overset{\mathrm{ind.}}{\sim} \mathrm{Unif}[0,1]^{\otimes 5}, & \text{with probability } 0.5.
\end{cases}
\end{align*}
Let
\begin{align*}
    P_{ij}(t) = \frac{X_i^{\top}X_j}{\Vert X_i \Vert \Vert X_j \Vert} \;\; \text{and} \;\; Q_{ij}(t) = \max\left\{q_1, \frac{X_i^{\top}X_j}{\Vert X_i \Vert \Vert X_j \Vert}\right\}.
\end{align*}
For $i < j$, we generate independently
\begin{align*}
    Y_{ij}(t)|\{X_i(t), X_j(t)\} \overset{\mathrm{ind.}}{\sim} \mathrm{Bernoulli}\big(P_{ij}(t)\big) \, \mbox{and} \, \Omega_{ij}(t)|\{X_i(t), X_j(t)\} \overset{\mathrm{ind.}}{\sim} \mathrm{Bernoulli}\big(Q_{ij}(t)\big).
\end{align*}

We consider symmetric adjacency matrices without self-loop, which does not guarantee a low-rank graphon. However, the adjacency and missingness matrices are generated from latent positions based on a Markov chain-based RDPG model. As a result, these processes possess temporal dependence.

\medskip
\noindent \textbf{Results}

We collect all the results here. Tables \ref{tab:delay-1} and \ref{tab:delay} show that for the SBM settings, our approach outperforms the competitors by having the smallest average detection delay for varying degrees of missingness, while maintaining exceptional control over proportion of false alarms.  While the performance of RC is close, it fails to keep the desired level of false alarms in some instances.  \Cref{tab: compare} shows that, for the temporally independent RDPG settings, our method outperforms the $k$-NN graph based approach and MissInspect.  The RC approach performs best across the board.  We conjecture that better choices of constants, e.g.~$C_{\lambda}$, in our algorithm may improve our performances further.   In \Cref{tab: compare2}, we collect results from temporally dependent RDPG settings and exclude the $k$-NN based approaches, which perform very poorly. We observe that MI does not perform well in this setting. Although RC provides the shortest detection delay, it does not guarantee the false alarm rate, particularly when the missingness is high (i.e.~$q_1 = 0.7$). Our method performs well in this scenario, as demonstrated in \Cref{tab: compare2}. We conjecture that a blockwise permutation for selecting $\{\epsilon_{s,t}\}$ may further improve our performance, given the temporal dependence in the data.

\begin{table}[ht]
\centering
\begin{tabular}{ccccccc}
\hline
 $\alpha$         & SI   & RC  & ORI   & W     & G     & MI  \\ \hline\hline
\multicolumn{7}{c}{$\pi = 0.7$}\\
 0.05          & \textbf{3.12} & 3.82 & 16.09 & 15.06 & 16.05 & 36.38  \\
                 0.01          & \textbf{3.37} & 3.86 & 19.68 & 17.11 & 19.42 & 37.25 \\
\multicolumn{7}{c}{$\pi = 0.8$}\\
 0.05          & \textbf{2.98} & 3.69 & 15.28 & 15.00 & 16.28 & 34.57 \\
              0.01          & \textbf{3.08} & 3.97 & 16.60 & 15.80 & 17.79 & 35.23 \\
\multicolumn{7}{c}{$\pi = 0.9$}\\
 0.05          & \textbf{2.87} & 3.50 & 16.12 & 15.36 & 17.26 & 47.03 \\
              0.01          & \textbf{2.90} & 3.60 & 17.68 & 16.52 & 18.63 & 48.22 \\
\multicolumn{7}{c}{$\pi = 0.95$}\\
 0.05          & \textbf{2.69} & 3.42 & 15.15 & 14.95 & 16.85 & 51.62 \\
              0.01          & \textbf{2.76} & 3.58 & 15.95 & 18.18 & 18.99 & 52.36 \\ \hline
\end{tabular}
\caption{Average detection delays over 100 repetitions of SBM considered in Scenario 1. SI: our proposed methods; RC: rank-constrained maximum likelihood estimator; ORI, W and G: three different statistics based on $k$-NN; MI: MissInspect; $\pi$, the missing probability; $\alpha$, the false alarm controls.}
\label{tab:delay-1} 
\end{table}

\begin{table}[ht]
\centering
\begin{tabular}{ccccccc}
\hline
 $\alpha$         & SI   & RC  & ORI   & W     & G     & MI  \\ \hline\hline
 \multicolumn{7}{c}{$\pi = 0.7$}\\
 0.05          & {0}  & 0.06 & 0.03 & 0.02 & 0.04 & 0.07 \\
                 0.01          & {0}  & 0.04 & {0}    &{ 0}    & {0}    & 0.04 \\
                 \multicolumn{7}{c}{$\pi = 0.8$}\\
 0.05          &  {0}  & 0.05 & 0.04 & 0.03 & 0.06 & {0}    \\
              0.01     & {0}  & 0.01 & 0.02 & 0.02 & 0.01 & {0}    \\
              \multicolumn{7}{c}{$\pi = 0.9$}\\
 0.05          &  {0}  & 0.1  & 0.11 & 0.05 & 0.05 & 0.04 \\
              0.01      & {0}  & 0.04 & 0.01 & 0.03 & 0.01 & 0.02 \\
              \multicolumn{7}{c}{$\pi = 0.95$}\\
 0.05         & {0}  & 0.08 & 0.09 & 0.08 & 0.06 & 0.03 \\
              0.01       & {0}  & 0.05 & 0.06 & 0.02 & 0.03 & 0.01 \\ \hline
\end{tabular}
\caption{False alarms rates over 100 repetitions of SBM considered in Scenario 1. SI: our proposed methods; RC: rank-constrained maximum likelihood estimator; ORI, W and G: three different statistics based on $k$-NN; MI: MissInspect; $\pi$, the missing probability; $\alpha$, the false alarm controls.}
\label{tab:delay} 
\end{table}

\begin{table}[ht]
\centering
\begin{tabular}{ccccccc}
\hline
 $\alpha$         & SI   & RC  & ORI   & W     & G     & MI  \\ \hline\hline
\multicolumn{7}{c}{Delay}\\
 0.05         & 9.67 & \textbf{4.00} & 23.53 & 24.26 & 24.85 & 19.53  \\
 0.01         & 9.62 & \textbf{4.04} & 26.58 & 25.85 & 26.88 & 21.60 \\
\multicolumn{7}{c}{PFA}\\
 0.05          & {0}  & {0}    & 0.04 & 0.04 & 0.01 & 0.11 \\
              0.01          & {0}  & {0}    & {0}    & 0.01 & 0.01 & {0}\\
\hline
\end{tabular}
\caption{Average detection delays and false alarms rates over 100 repetitions of RDPG considered in Scenario 2, with the missing probability $\pi = q_1 = q_2 =0.9$. SI: our proposed methods; RC: rank-constrained maximum likelihood estimator; ORI, W and G: three different statistics based on $k$-NN; MI: MissInspect; $\alpha$, the false alarm controls. }
\label{tab: compare}
\end{table}

\begin{table}  
\centering
\begin{tabular}{lllllll}
    \toprule
    $\alpha$ & SI & RC & MI & SI & RC & MI \\
    \midrule
    & \multicolumn{3}{c}{Delay,\; $q_1 = 0.7$} & \multicolumn{3}{c}{Delay,\; $q_1 = 0.8$}\\
    \cmidrule(r){2-4} \cmidrule(r){5-7} 
    0.05 & 5.17 & 1.88 & 23.07 & 4.60 & 1.91 & 22.33 \\
    0.01 & 5.47 & 2.04 & 25.48 & 5.10 & 2.34 & 23.52 \\ [3pt]
    & \multicolumn{3}{c}{PFA,\; $q_1 = 0.7$} & \multicolumn{3}{c}{PFA,\; $q_1 = 0.8$} \\
    \cmidrule(r){2-4} \cmidrule(r){5-7} 
    0.05 & 0.02 & 0.15 & 0.28 & 0.06 & 0.08 & 0.20 \\
    0.01 & 0 & 0.07 & 0.08 & 0 & 0 & 0.10\\
    \bottomrule
  \end{tabular}
  \caption{Average detection delays and false alarms rates over 100 repetitions of RDPG considered in Scenario 3, with the missingness probabilities $q_1 \in \{0.7, 0.8\}$ and $q_2 = 1$. SI: our proposed methods; RC: rank-constrained maximum likelihood estimator; MI: MissInspect; $\alpha$, the false alarm controls. }
  \label{tab: compare2}
\end{table}

\subsection{Real data analysis} \label{sec-data}  

We apply the proposed online change point detection method and competitors discussed in \Cref{sec-simulation} to networks obtained from the weekly log-returns of $29$ companies, based on the Dow Jones Industrial Average index from April 1990 to January 2012 \citep[see the R package \texttt{ecp},][]{james2013ecp}.  The data are processed as follows.  We first obtain sequences of outer-product matrices of the companies’ weekly log-returns at each week and then threshold the entries of matrices by setting the entries with values above the $95\%$ quantile as $1$ and zero otherwise.  Following this we apply the missing scheme described in \textbf{Scenario 1} with $\pi= 0.9$. Finally, the network sequence, with missing values, together with their associated missingness pattern matrices are treated as the observed data.  We consider two time periods: period 1 is from 2-Apr-1990 to 31-May-2004 and period 2 is from 31-May-2004 to 1-Mar-2010.  

\begin{table}[ht]
\centering
\begin{tabular}{cccc}
\hline
        $\alpha$ & SI         & RC & ORI     \\ \hline\hline
        \multicolumn{4}{c}{Period 1} \\
 0.05  & 2001-10-22 & Inf  & Inf \\
0.01  & 2001-12-17 & Inf  & Inf  \\
\multicolumn{4}{c}{Period 2} \\
 0.05  & 2008-04-14 & 2008-06-23  & 2009-08-03 \\
 0.01  & 2008-06-16 & 2009-06-01  & 2009-08-03 \\ \hline 
$\alpha$ & W        & G & MI    \\ \hline\hline
        \multicolumn{4}{c}{Period 1} \\
 0.05  & 2002-12-30 & Inf  & 2001-07-23 \\
0.01  & Inf & Inf & Inf \\
\multicolumn{4}{c}{Period 2} \\
 0.05  & 2009-04-20 & 2009-08-03 & 2008-12-22 \\
 0.01  &2009-08-03 & 2009-08-03 & Inf \\ \hline 
\end{tabular}
\caption{Stock exchange networks: Estimated change points with the missing probability $\pi=0.9$.  SI: our proposed methods; RC: rank-constrained maximum likelihood estimator; ORI, W and G: three different statistics based on $k$-NN; MI: MissInspect; $\alpha$, the false alarm controls.}
\label{tab:data}
\end{table}

For period 1, we let the data dated from 2-Apr-1990 to 4-Jan-1999 be the training set, used to calibrate the tuning parameters, and detect change points in the rest in an online fashion.  To be specific, instead of letting all data available at once, we pretend that the data are obtained sequentially.  Note that there was some dramatic financial turbulence right after the terrorist attacks on 11-Sept-2001.  Our proposed method successfully detects a change point at 22-Oct-2001 (17-Dec-2001) with $\alpha = 0.05$ (0.01), indicating a small detection delay.  As for the competitors, the weighted edge count statistic (W) detects a much later change point and MissInspect (MI) detects the date 23-July-2001, which appears more likely to be a false alarm. 

For period 2, we let the data dated from 31-May-2004 to 15-Jan-2007 be the training set and detect change points in the rest in an online fashion.  Note that there was a financial crisis in 2007-2008.  We detect a change point in April (June) 2008 with $\alpha = 0.05$ (0.01), with a significantly smaller detection delay than the competitors.

\subsection{Additional simulation results}\label{sec-addition}

We now demonstrate the effect of the low-rank assumption and the robustness of the choices of $C_{\lambda}$.

Let the network size $n = 100$ and let $b$ be the number of equally-sized communities, each of which has size around $n/b$.  Let $z_i \in \{1, \ldots, b\}$ be the community membership of node $i$, $i \in \{1, \ldots, n\}$.  Let the $(i,j)$th entry of the underlying graphon matrix be $\Theta_{ij}(t) = \vartheta  B_{z_iz_j}(t)$, where $\vartheta  = 0.5$ and the matrices $B(t)$'s are
\begin{align*}
    B(t) = \begin{cases}
  0.2 \times \mathbf{1}_{b} + 0.7 \times \mathbf{I}_b, & t \in \{1,2, \dots, \Delta\}, \\    
   0.1 \times \mathbf{1}_b + 0.4 \times \mathbf{I}_b, & t \in \{\Delta+1, \Delta+2, \dots, T\}, 
\end{cases}
\end{align*}
where $\mathbf{1}_b$ is a $b \times b$ matrix of ones and $\mathbf{I}_b$ is an identity matrix of size $b$.
For $t \in \{1, 2, \dots, T\}$, we first generate the unobserved, symmetric adjacency matrices $\{Y(t)\}_{t = 1}^T$ as
\begin{align*}
    Y_{ij}(t) \overset{\mathrm{ind.}}{\sim} \mathrm{Bernoulli}(\Theta_{ij}(t)), \quad  i \leq j,
\end{align*}
followed by the missingness matrices $\{\Omega(t)\}_{t = 1}^T$ as
\begin{align*}
    \Omega_{ij}(t) \overset{\mathrm{ind.}}{\sim} \mathrm{Bernoulli}(\pi), \quad i \leq j,
\end{align*}
where $\pi = q_1 = q_2 = 0.9$.   Note that we consider symmetric adjacency matrices with self-loop, which ensures the symmetrized graphon matrix of rank $b$. We consider $b \in \{3, 5, 10, 25, 50\}$. 

\textbf{Varying rank.}  Fixing $C_{\lambda} = 0.67$, the performance of our method is reported in \Cref{tab:delay-s1}, implying the trend that the detection delay increases  as the rank $b$ increases.  This is as expected and is indicated in \Cref{thm-upper-bound}.  
\begin{table}[ht]
\centering
\begin{tabular}{cccccc}
\hline
 $\alpha$         & $b = 3$   & $b = 5$  & $b = 10$   & $b = 25$     & $b = 50$   \\ \hline\hline
\multicolumn{6}{c}{Delay}\\
 0.05          & 3.33 & 3.90 & 6.67 & 8.81 & 8.50  \\
                 0.01          & 3.62 & 4.29 & 6.78 & 8.21 & 8.33 \\
\multicolumn{6}{c}{PFA}\\
 0.05          & 0 & 0 & 0 & 0 & 0  \\
              0.01          & 0 & 0 & 0 & 0 & 0  \\ \hline
\end{tabular}
\caption{Average detection delays and false alarms
rates of our proposed method over 100 repetitions of SBM, with $q_1 = q_2 = 0.9$. $\alpha$, the false alarm controls; b, the number of communities of underlying graphon.}
\label{tab:delay-s1} 
\end{table}

\textbf{Robustness of the choices of $C_{\lambda}$.}  We let $C_{\lambda} \in \{0.25, 0.50, 0.67, 1.00, 1.25\}$ and collect results in \Cref{tab:delay-s2}.  It can be seen that the $C_{\lambda}$, which produces the smallest detection delay among candidates, varies in different settings. 
\begin{table}[ht]
\centering
\begin{tabular}{cccccc}
\hline
 $\alpha$         & $C_{\lambda} = 0.25$   & $C_{\lambda} = 0.50$  & $C_{\lambda} = 0.67$   & $C_\lambda = 1.00$     & $C_{\lambda} = 1.25$   \\ \hline\hline
\multicolumn{6}{c}{b = 3}\\
 0.05          & 6.15 (0.02) & \textbf{2.72} (0.02) & 3.33 (0) & 3.21 (0) & 3.76 (0)  \\
                 0.01          & 6.15 (0.01) & \textbf{2.97} (0) & 3.62 (0) & 3.43 (0) & 3.85 (0) \\
\multicolumn{6}{c}{b = 5}\\
 0.05          & 6.64 (0) & 4.10 (0) & \textbf{3.90} (0) & 4.95 (0) & 5.98 (0)  \\
              0.01          & 6.36 (0) & 4.57 (0) & \textbf{4.29} (0) & 5.29 (0) & 6.31 (0) \\
\multicolumn{6}{c}{b = 10}\\
 0.05          & 8.38 (0) & \textbf{6.01} (0) & 6.67 (0) & 9.68 (0) & 13.29 (0)  \\
              0.01          & 8.00 (0) & \textbf{5.91} (0) & 6.78 (0) & 9.67 (0) & 13.27 (0) \\
\multicolumn{6}{c}{b = 25}\\
 0.05          & 15.90 (0) & 9.78 (0) & \textbf{8.81} (0) & 9.18 (0) & 9.25 (0)  \\
              0.01          & 14.60 (0) & 9.27 (0) & \textbf{8.21} (0) & 8.79 (0) & 9.01 (0) \\
\multicolumn{6}{c}{b = 50}\\
 0.05          & 16.89 (0) & 10.73 (0) & 8.50 (0) & 5.28 (0) & \textbf{4.60} (0)  \\
              0.01          & 15.57 (0) & 10.65 (0) & 8.33 (0) & 5.08 (0) & \textbf{4.52} (0) \\
              \hline
\end{tabular}
\caption{Average detection delays and false alarms
rates (in parenthesis) of our proposed method over 100 repetitions of SBM, with $q_1 = q_2 = 0.9$. $\alpha$, the false alarm controls; b, the number of communities.}
\label{tab:delay-s2} 
\end{table}

\section{Conclusion}

In this paper, we study an online change point detection in a dynamic networks with heterogeneous missingness.  A highlight of our framework is that we allow for dependence: edge dependence and missingness dependence within a network, temporal dependence across time and the dependence between the edge pattern and missingness pattern.  Despite the technical condition on the $\phi$-mixing coefficients, we have demonstrated its practicality through concrete examples.  Having such a general framework, we are still able to show a near-optimal detection delay upper bound, subject to an overall Type-I error control.

\bibliography{ref}
\bibliographystyle{apalike}

\appendix

\section*{Appendices}

\section[]{Proof of results in \Cref{sec-problem-setup}}

\begin{proof}[Proof of \Cref{theorem:markov-phi}]
    It suffice to consider a fixed $(i,j) \in \{1, \ldots, n\}^{\otimes 2}$. 
    Let $\{U_{ij}(t)\}_{t \in \mathbb{N}^*}$ be a sequence of i.i.d. $\textrm{Unif}~[0,1]$ random variables, which are independent of $\{X_i(t), X_j(t)\}_{t \in \mathbb{N}^*}$. We have equivalently that
    \begin{equation*}
        Y_{ij}(t) = \mathbbm{1}\{X_{i}(t)^{\top}X_j(t) + U_{ij}(t) \geq 1\}.
    \end{equation*}
    Since $\mathbbm{1}\{\cdot \geq 1\}: [0,2] \mapsto \{0, 1\}$ is a measurable function, for any $t \in \mathbb{N}^*$, we have that the $\sigma$-algebra satisfies
    \begin{align*}
        \sigma\big(\mathbbm{1}\{X_{i}(t)^{\top}X_j(t) + U_{ij}(t) \geq 1\}\big)
        \subset \sigma\big(X_{i}(t),X_{j}(t), U_{ij}(t)\big).
    \end{align*}
    Due to the temporal independence, the $\phi$-mixing coefficient of $\{U_{ij}(t)\}_{t \in \mathbb{N}^*}$ is zero for any $\ell \geq 1$. By \Cref{lemma:mixing_indep_comp}, we have that for any $\ell \in \mathbb{N}$,
    \begin{equation*}
        \phi_{Y_{ij}}(\ell) \leq 2\max_{i = 1}^n\phi_{X_i}(\ell) \leq 2C\rho^{\ell},
    \end{equation*}
    which completes the proof.
\end{proof}

\section[]{Proofs of results in \Cref{sec-main}}

\subsection[]{Proof of \Cref{lem-soft-impute-property}}

\begin{proof}[Proof of \Cref{lem-soft-impute-property}]
The results have two parts and we prove them separately.  The first part is an adaptation of the proof Lemma~1 in \cite{mazumder2010spectral} and the second part is an adaptation of the proof Lemma~1 in \cite{klopp2015matrix}.

\medskip
\noindent \textbf{The form of $\widehat{Z}$.}

For any matrix $Z \in \mathbb{R}^{n \times n}$, let $Z = \widetilde{U} \widetilde{D} \widetilde{V}^{\top}$ be a singular value decomposition of $Z$.  Note that
	\begin{align}
		& \frac{1}{2m} \sum_{t = 1}^{m} \|R(t) - Z\|_{\mathrm{F}}^2 + \lambda\|Z\|_* = \frac{1}{2m} \sum_{t = 1}^{m} \left\{\left\|R(t)\right\|_{\mathrm{F}}^2 - 2\sum_{i = 1}^n \tilde{d}_i \tilde{u}_i^{\top} R(t) \tilde{v}_i + \sum_{i = 1}^n \tilde{d}_i^2\right\} + \lambda \sum_{i = 1}^n \tilde{d}_i \nonumber \\
		= & \frac{1}{2} \left\{\frac{1}{m}\sum_{t = 1}^{m}\|R(t)\|_{\mathrm{F}}^2 - 2 \sum_{i = 1}^n \tilde{d}_i \tilde{u}_i^{\top} \left[\frac{1}{m} \sum_{t = 1}^{m} R(t)\right] \tilde{v}_i + \sum_{i = 1}^n \tilde{d}_i^2\right\} + \lambda \sum_{i = 1}^n \tilde{d}_i, \label{eq-proof-lem1-1}
	\end{align} 	
	where 
    \[
		\widetilde{D} = \mathrm{diag}\{\tilde{d}_1, \ldots, \tilde{d}_n\}, \quad \widetilde{U} = (\tilde{u}_1, \ldots, \tilde{u}_n) \quad \mbox{and} \quad \widetilde{V} = (\tilde{v}_1, \ldots, \tilde{v}_n),
	\]
    and the first identity follows from
    \[
        \mathrm{tr}\{Z^{\top}R(t)\} = \mathrm{tr}\{\widetilde{V} \widetilde{D} \widetilde{U}^{\top} R(t)\} = \sum_{i = 1}^n \mathrm{tr} \{\tilde{d}_i \tilde{v}_i \tilde{u}_i^{\top} R(t)\} = \sum_{i = 1}^n \tilde{d}_i \tilde{u}_i^{\top} R(t)\tilde{v}_i.
    \]	
	
Minimising \eqref{eq-proof-lem1-1} is equivalent to minimising
	\[
		- 2 \sum_{i = 1}^n \tilde{d}_i \tilde{u}_i^{\top} \left[\frac{1}{m} \sum_{t = 1}^{m} R(t)\right] \tilde{v}_i + \sum_{i = 1}^n \tilde{d}_i^2 + 2\lambda \sum_{i = 1}^n \tilde{d}_i, 
	\]
	with respect to $(\tilde{u}_i, \tilde{v}_i, \tilde{d}_i)$, $i = 1, \ldots, n$, under the constraints $\widetilde{U}^{\top}\widetilde{U} = \widetilde{V}^{\top}\widetilde{V} = I_n$ and $\tilde{d}_i \geq 0$.
	
Observe that the above is equivalent to minimising (with respect to $\widetilde{U}$ and $\widetilde{V}$) the function $g(\widetilde{U}, \widetilde{V})$,
	\begin{equation}\label{eq-proof-lem1-2}
		g(\widetilde{U}, \widetilde{V}) = \min_{\widetilde{D} \geq 0} \left\{- 2 \sum_{i = 1}^n \tilde{d}_i \tilde{u}_i^{\top} \left[\frac{1}{m} \sum_{t = 1}^{m} R(t)\right] \tilde{v}_i + \sum_{i = 1}^n \tilde{d}_i^2 + 2\lambda \sum_{i = 1}^n \tilde{d}_i\right\}.
	\end{equation}
	Since the objective \eqref{eq-proof-lem1-2} to be minimised with respect to $\widetilde{D}$, is separable in $\tilde{d}_i$, $i = 1, \ldots, n$.  It suffices to minimise it with respect to each $\tilde{d}_i$ separately.
	
The problem 
	\[
		\min_{\tilde{d}_i \geq 0} \left\{- 2 \tilde{d}_i \tilde{u}_i^{\top} \left[\frac{1}{m} \sum_{t = 1}^{m} R(t)\right] \tilde{v}_i + \tilde{d}_i^2 + 2\lambda \tilde{d}_i\right\}
	\]	
	can be solved by the solution 
	\[
		S_{\lambda}\left(\tilde{u}_i^{\top} \left[\frac{1}{m} \sum_{t = 1}^{m} R(t)\right] \tilde{v}_i\right) = \left(\tilde{u}_i^{\top} \left[\frac{1}{m} \sum_{t = 1}^{m} R(t)\right] \tilde{v}_i - \lambda\right)_+.
	\]
	Plugging this into \eqref{eq-proof-lem1-2} leads to 
	\begin{align*}
		g(\widetilde{U}, \widetilde{V}) & = \frac{1}{2} \Bigg\{\frac{1}{m}\sum_{t = 1}^{m} \|R(t)\|_{\mathrm{F}}^2 -  2\sum_{i = 1}^n \left(\tilde{u}_i^{\top} \left[\frac{1}{m}\sum_{t = 1}^{m} R(t)\right] \tilde{v}_i - \lambda\right)_+\left(\tilde{u}_i^{\top} \left[\frac{1}{m}\sum_{t = 1}^{m} R(t)\right] \tilde{v}_i - \lambda\right) \\
		& \hspace{1cm} + \sum_{i = 1}^n \left(\tilde{u}_i^{\top} \left[\frac{1}{m}\sum_{t = 1}^{m} R(t)\right] \tilde{v}_i - \lambda\right)_+^2\Bigg\}.
	\end{align*}
	Minimising $g(\widetilde{U}, \widetilde{V})$ with respect to $(\widetilde{U}, \widetilde{V})$ is equivalent to maximising
	\begin{equation}\label{eq-proof-lemma1-singular-values}
	    \sum_{\tilde{u}_i^{\top} \left[\frac{1}{m}\sum_{t = 1}^{m} R(t)\right] \tilde{v}_i > \lambda} \left(\tilde{u}_i^{\top} \left[\frac{1}{m}\sum_{t = 1}^{m} R(t)\right] \tilde{v}_i - \lambda\right)_+^2. 
	\end{equation}
    Due to the definitions of singular values, we see that \eqref{eq-proof-lemma1-singular-values} is solved by the left and right singular vectors of the matrix $m^{-1}\sum_{t = 1}^{m} R(t)$.  The final result then follows.

\medskip

\noindent \textbf{The convergence.}

Since for any $t \in \{1, \ldots, m\}$ and $k \in \mathbb{N}$, it holds that 
	\[
		\|(M_{k+1} - M_k)_{\overline{\Omega(t)}}\|_{\mathrm{op}} \leq \|(M_{k+1} - M_k)_{\overline{\Omega(t)}}\|_{\mathrm{F}} \leq \|M_{k+1} - M_k\|_{\mathrm{F}}
	\]	
	and
	\[
		\|M_{k+1} - M_k\|_{\infty} \leq \|M_{k+1} - M_k\|_{\mathrm{F}},
	\]
	it suffices to show that $\|M_{k+1} - M_k\|_{\mathrm{F}} \to 0$, as $k \to \infty$.

Denote by $M_k^*$ the $k$th step solution before truncating.
For any $k \in \mathbb{N}^*$, we have that 
	\begin{align}
		& \|M_{k+1} - M_k\|_{\mathrm{F}} \leq \|M^*_{k+1} - M^*_k\|_{\mathrm{F}} \nonumber \\
		= & \left\|S_{\lambda} \left\{m^{-1} \sum_{t = 1}^{m} \left[(Y(t))_{\Omega(t)} + (M_k)_{\overline{\Omega(t)}}\right]\right\} - S_{\lambda} \left\{m^{-1} \sum_{t = 1}^{m} \left[(Y(t))_{\Omega(t)} + (M_{k-1})_{\overline{\Omega(t)}}\right]\right\}\right\|_{\mathrm{F}} \nonumber \\
		\leq & \left\|\left\{m^{-1} \sum_{t = 1}^{m} \left[(Y(t))_{\Omega(t)} + (M_k)_{\overline{\Omega(t)}}\right]\right\} - \left\{m^{-1} \sum_{t = 1}^{m} \left[(Y(t))_{\Omega(t)} + (M_{k-1})_{\overline{\Omega(t)}}\right]\right\}\right\|_{\mathrm{F}} \nonumber \\
		= & \left\|m^{-1} \sum_{t = 1}^{m} (M_k - M_{k-1})_{\overline{\Omega(t)}}\right\|_{\mathrm{F}} \leq \|M_k - M_{k-1}\|_{\mathrm{F}}. \label{eq-pf-lem-1-klopp-1}
	\end{align}
	This implies that the sequence $\{\|M_k - M_{k-1}\|_{\mathrm{F}}\}_{k \in \mathbb{N}^*}$ converges.  It remains to show that this sequence converges to zero.
	
Note that \eqref{eq-pf-lem-1-klopp-1} imply that 
	\[
		\|M_k - M_{k-1}\|_{\mathrm{F}}^2 - \left\|m^{-1} \sum_{t = 1}^{m} (M_k - M_{k-1})_{\overline{\Omega(t)}}\right\|_{\mathrm{F}}^2 = \left\|m^{-1} \sum_{t = 1}^{m} (M_k - M_{k-1})_{\Omega(t)}\right\|_{\mathrm{F}}^2 \to 0,
	\]	  
	as $k \to \infty$.  We therefore only need to show that 
	\begin{equation}\label{eq-lemma1-proof-sufficient-cond}
	    \left\|m^{-1} \sum_{t = 1}^{m} (M_k - M_{k-1})_{\overline{\Omega(t)}}\right\|_{\mathrm{F}} \to 0, \quad \mbox{as } k \to \infty.
	\end{equation}
	
Define
	\[
		h(A, B) = \frac{1}{2m} \sum_{t = 1}^{m} \left\|(Y(t) - B)_{\Omega(t)}\right\|_{\mathrm{F}}^2 + \frac{1}{2m} \sum_{t = 1}^{m} \left\|(A - B)_{\overline{\Omega(t)}}\right\|_{\mathrm{F}}^2 + \lambda \|B\|_*.
	\]
	Due to the definition of $M^*_{k+1}$, we have that
	\begin{align*}
		& h(M_k, M^*_k) = \frac{1}{2m} \sum_{t = 1}^{m} \left\|(Y(t) - M^*_k)_{\Omega(t)}\right\|_{\mathrm{F}}^2 + \frac{1}{2m} \sum_{t = 1}^{m} \left\|(M_k - M^*_k)_{\overline{\Omega(t)}}\right\|_{\mathrm{F}}^2 + \lambda \|M^*_k\|_* \\
        \geq & \frac{1}{2m} \sum_{t = 1}^{m} \left\|(Y(t) - M^*_{k+1})_{\Omega(t)}\right\|_{\mathrm{F}}^2 + \frac{1}{2m} \sum_{t = 1}^{m} \left\|(M_k - M^*_{k+1})_{\overline{\Omega(t)}}\right\|_{\mathrm{F}}^2 + \lambda \|M^*_{k+1}\|_* \\
        = & h(M_k, M^*_{k+1}) \geq \frac{1}{2m} \sum_{t = 1}^{m} \left\|(Y(t) - M^*_{k+1})_{\Omega(t)}\right\|_{\mathrm{F}}^2 + \frac{1}{2m}\sum_{t = 1}^{m}\left\|(M_{k+1} - M^*_{k+1})_{\overline{\Omega(t)}}\right\|_{\mathrm{F}}^2 + \lambda \|M^*_{k+1}\|_* \\
		= & h(M_{k+1}, M^*_{k+1}),
	\end{align*}
	which shows that the sequence $\{h(M_k, M^*_k)\}_{k \geq 1}$ converges.  Therefore,
	\begin{align*}
		& h(M_k, M^*_{k+1}) - h(M_{k+1}, M^*_{k+1}) \\
        = & \frac{1}{2m}\sum_{t = 1}^{m}\left\|(M_k - M^*_{k+1})_{\overline{\Omega(t)}}\right\|_{\mathrm{F}}^2 - \frac{1}{2m}\sum_{t = 1}^{m}\left\|(M_{k+1} - M^*_{k+1})_{\overline{\Omega(t)}}\right\|_{\mathrm{F}}^2 \to 0.
	\end{align*}

Since
	\begin{align*}
		& \frac{1}{2m}\sum_{t = 1}^{m}\left\|(M_k - M^*_{k+1})_{\overline{\Omega(t)}}\right\|_{\mathrm{F}}^2 - \frac{1}{2m}\sum_{t = 1}^{m}\left\|(M_{k+1} - M^*_{k+1})_{\overline{\Omega(t)}}\right\|_{\mathrm{F}}^2 \geq \frac{1}{2m}\sum_{t = 1}^{m} \left\|(M_k - M_{k+1})_{\overline{\Omega(t)}}\right\|_{\mathrm{F}}^2,
	\end{align*}
    combining with \eqref{eq-lemma1-proof-sufficient-cond}, we complete the proof.
\end{proof}

\subsection[]{Proofs of \Cref{thm-upper-bound} and \Cref{cor-main-klopp-cor-3}}

\subsubsection{Notation}

For notational simplicity, in the proof, we adopt an equivalence notation system.  Let the observations be $\{Y_{\Omega}(t), \Omega(t)\}_{t \in \mathbb{N}^*}$, and for each $(i, j) \in \{1, \ldots, n\}^{\otimes 2}$
	\[
		Y_{\Omega, ij}(t) = \Omega_{ij}(t)\{P_{ij}(t) + \xi_{ij}(t)\},
	\]
where $P_{ij}(t) = \{X_i(t)\}^{\top}\{X_j(t)\}$ and the noise $\xi_{ij}(t)$ satisfies that 
	\begin{align*}
		\mathbb{P}\big(\xi_{ij}(t) = 1 - P_{ij}(t)|\{X_i(t), X_j(t)\}\big) = 1 - \mathbb{P}\big(\xi_{ij}(t) = -P_{ij}(t)|\{X_i(t), X_j(t)\}\big) = P_{ij}(t).
	\end{align*}
Let $E_{ij} = e_i e_j^{\top}$ be the matrix basis and 
	\begin{equation} \label{eq-Sigma-mat-def}
	    \Sigma(t) = \sum_{(i, j)} \left(\Omega_{ij}(t)\big\{P_{ij}(t) + \xi_{ij}(t) - \Theta_{ij}(t)\big\}\right) E_{ij}.
	\end{equation}
and
    \begin{equation} \label{eq-Sigma-prime-maat-def}
        \Sigma_{R}(t) = \sum_{(i,j)}\big\{\Omega_{ij}(t) - \Pi_{ij}(t)\big\}E_{ij}.
    \end{equation}
Note that under Assumption~\ref{assume-model-new}, we have the sequence $\{\Omega(t)\}_{t \in \mathbb N^*}$ is independent of $\{\xi(t)\}_{t \in \mathbb N^*}$ and $\{P(t)\}_{t \in \mathbb N^*}$. Thus, we have that
\[
\mathbb E[\Sigma(t)] = \mathbb E\left[\mathbb E[\Sigma(t)|\{X_i(t)\}_{i = 1}^n]\right] = 0_{n \times n}.
\]

For any linear vector subspace $S \subset \mathbb{R}^n$, let $P_S$ be the projector onto $S$ and $S^{\perp}$ be the orthogonal complement of $S$.  For any matrix $A$, let $\{u_j(A)\}$ and $\{v_j(A)\}$ be the left and right orthonormal singular vectors of $A$, respectively.  Let $S_1(A) = \mathrm{span}\{u_j(A)\}$ and $S_2(A) = \mathrm{span}\{v_j(A)\}$.  For any matrix $B$, let
    \begin{equation}\label{eq-mathbfP-def}
        \mathbf{P}_A(B)^{\perp} = P_{S_1^{\perp}(A)} BP_{S_2^{\perp}(A)} \quad \mbox{and} \quad \mathbf{P}_A(B) = B - \mathbf{P}_A(B)^{\perp}.
    \end{equation}
For any matrix $A \in \mathbb{R}^{n \times n}$, define the weighted Frobenius norm of $A$ as
    \begin{equation}\label{eq-weightedFrob-def}
        \|A\|^2_{L_2(U)} = \sum_{(i, j)} U_{ij} A_{ij}^2,
    \end{equation}
    where the weights are given by a determinestic matrix $U = (U_{ij})_{(i,j) \in \{1, \ldots, n\}^{\otimes 2}}$.
	
\subsubsection{Estimation bounds}	
	
We collect results on the estimation error bounds on the soft-impute estimators in this subsection.  The results here are based on their counterparts in \cite{klopp2015matrix}.  Note that \cite{klopp2015matrix} is only concerned with one matrix observation, while we deal with a multiple-copy version and allow for temporal and cross-sectional dependence.

\begin{proposition} \label{thm-main-klopp-cor-3}
Assume that Assumptions~\ref{assume-model-new} and \ref{assume-no-change} hold.  Denote $\Theta(t) = \Theta_0$ for $t \in \mathbb{N}^*$.  For any integer pair $0 \leq s < e$, let $\Sigma_{s:e} = (e-s)^{-1} \sum_{t = s+1}^e \Sigma(t)$, where $\Sigma(t)$ is defined in \eqref{eq-Sigma-mat-def}.  Let $\widehat{\Theta}_{s:e}$ be the output of \Cref{alg-main-klopp} with 
    \begin{equation}\label{eq-prop-3-lambda-cond}
        \lambda_{s,e} \geq 3\|\Sigma_{s:e}\|_{\mathrm{op}},        
    \end{equation}
    $a = \vartheta $ given in Assumption~\ref{assume-model-new} and a generic initialiser $\widehat{\Theta} \in \{0,1\}^{n \times n}$. Then, with probability at least $1 - \delta$, it holds that 
	\begin{align*}
	    \|\widehat{\Theta}_{s:e} - \Theta_0\|_{\mathrm{F}}^2 \leq \frac{C}{q_1^2} \Bigg\{\mathrm{rank}(\Theta_0)\left(\lambda_{s,e}^2 + \vartheta^2 \left\{\mathbb{E}\left[\|\Sigma_{R,s:e}\|_{\mathrm{op}}\right]\right\}^2\right) + \frac{\log(1/\delta)}{e-s}\Bigg\},
	\end{align*}
	where $\Sigma_{R,s:e} = (e-s)^{-1}\sum_{t = s+1}^e\Sigma_R(t)$ and $\Sigma_R(t)$ is defined in \eqref{eq-Sigma-prime-maat-def}, and $C > 0$ is an absolute constant.
\end{proposition}

\begin{proof}[Proof of \Cref{thm-main-klopp-cor-3}]
Since $\|\widehat{\Theta}_{s:e} - \Theta_0\|_{\mathrm{F}}^2 \leq \|\widetilde{\Theta}_{s:e} - \Theta_0\|_{\mathrm{F}}^2$, it suffice to show that
\begin{align*}
	    \|\widetilde{\Theta}_{s:e} - \Theta_0\|_{\mathrm{F}}^2 \leq \frac{C}{q_1^2} \Bigg\{\mathrm{rank}(\Theta_0)\left(\lambda_{s,e}^2 + \vartheta^2 \left\{\mathbb{E}\left[\|\Sigma_{R,s:e}\|_{\mathrm{op}}\right]\right\}^2\right) + \frac{\log(1/\delta)}{e-s}\Bigg\}.
	\end{align*}
Recall that 
    \begin{align*}
        \widetilde{\Theta}_{s:e} & \in \argmin_{\Theta \in \mathbb{R}^{n \times n}} f_{\lambda_{s,e}, s:e} = \argmin_{\Theta \in \mathbb{R}^{n \times n}} f(\Theta; \{Y_{\Omega}(t), \Omega(t)\}_{t = s+1}^e, \widehat{\Theta}, \lambda_{s,e}) \\
        & = \argmin_{\Theta \in \mathbb{R}^{n \times n}}  \left\{\frac{1}{2(e-s)} \sum_{t = s+1}^e \left\|Y_{\Omega}(t) + \widehat{\Theta}_{\overline{\Omega(t)}} - \Theta\right\|_{\mathrm{F}}^2 + \lambda_{s,e}\|\Theta\|_*\right\}.
    \end{align*}
    
\medskip
\noindent \textbf{Step 1.} It follows from \Cref{lem-soft-impute-property} that $\widetilde{\Theta}_{s:e}$ minimises $f_{\lambda_{s,e}, s:e}(\Theta)$.  Using the sub-gradient stationary conditions we have that
	\begin{equation}\label{eq-proof-prop3-step1-1}
	    - \Bigg\langle \frac{1}{e-s} \sum_{t = s+1}^e \left\{Y_{\Omega}(t) + \widehat{\Theta}_{\overline{\Omega(t)}}\right\} - \widetilde{\Theta}_{s:e}, \, \widetilde{\Theta}_{s:e} - \Theta_0\Bigg\rangle + \lambda_{s,e} \left\langle \widehat{V}, \, \widetilde{\Theta}_{s:e} - \Theta_0\right\rangle \leq 0,
	\end{equation}
	where $\widehat{V} \in \partial \|\widetilde{\Theta}_{s:e}\|_*$.  We have that
	\begin{align}
		& \sum_{t = s+1}^e \left\langle \left(Y(t) - \Theta_0\right)_{\Omega(t)}, \, \widetilde{\Theta}_{s:e} - \Theta_0\right\rangle + \sum_{t = s+1}^e \left\langle \left(\widehat{\Theta} - \widetilde{\Theta}_{s:e}\right)_{\overline{\Omega(t)}}, \, \widetilde{\Theta}_{s:e} - \Theta_0\right\rangle \nonumber\\
        &+ \lambda_{s,e} (e-s) \left\langle \widehat{V}, \, \Theta_0 - \widetilde{\Theta}_{s:e}\right\rangle \nonumber \\
		= & \sum_{t = s+1}^e \left\langle \left(Y(t) - \Theta_0\right)_{\Omega(t)} + \left(\widehat{\Theta} - \widetilde{\Theta}_{s:e}\right)_{\overline{\Omega(t)}}, \, \widetilde{\Theta}_{s:e} - \Theta_0 \right\rangle - \lambda_{s,e} (e-s) \left\langle \widehat{V}, \, \widetilde{\Theta}_{s:e} - \Theta_0\right\rangle \nonumber \\
		\geq & \sum_{t = s+1}^e \left\langle \left(Y(t) - \Theta_0\right)_{\Omega(t)} + \left(\widehat{\Theta} - \widetilde{\Theta}_{s:e}\right)_{\overline{\Omega(t)}}, \, \widetilde{\Theta}_{s:e} - \Theta_0 \right\rangle \nonumber \\
        & \hspace{1cm} - \sum_{t = s+1}^e \left\langle \left\{Y_{\Omega}(t) + \widehat{\Theta}_{\overline{\Omega(t)}}\right\} - \widetilde{\Theta}_{s:e}, \, \widetilde{\Theta}_{s:e} - \Theta_0\right\rangle \nonumber \\
		= & \sum_{t = s+1}^e \left\langle\left (\widetilde{\Theta}_{s:e} - \Theta_0\right)_{\Omega(t)}, \, \widetilde{\Theta}_{s:e} - \Theta_0\right\rangle = \sum_{t = s+1}^e \left\|\left(\widetilde{\Theta}_{s:e} - \Theta_0\right)_{\Omega(t)}\right\|_{\mathrm{F}}^2, \label{eq-proof-prop3-step1-2}
	\end{align}
    where the inequality follows from \eqref{eq-proof-prop3-step1-1} and the second identity follows from the fact that $Y_{\Omega}(t)$'s are only observed on $\Omega(t)$'s.
\\
Equation~\eqref{eq-proof-prop3-step1-2} implies that
	\begin{align}
		& \frac{1}{e-s} \sum_{t = s+1}^e \left\|\left(\widetilde{\Theta}_{s:e} - \Theta_0\right)_{\Omega(t)}\right\|_{\mathrm{F}}^2 \nonumber \\ 
		\leq & \left|\frac{1}{e-s} \sum_{t = s+1}^e \left\langle (Y(t) - \Theta_0)_{\Omega(t)}, \, \widetilde{\Theta}_{s:e} - \Theta_0\right\rangle \right|\label{eq-decomp-1} \\
		& + \left|\frac{1}{e-s} \sum_{t = s+1}^e \left\langle \left(\widehat{\Theta} - \widetilde{\Theta}_{s:e}\right)_{\overline{\Omega(t)}}, \, \widetilde{\Theta}_{s:e} - \Theta_0\right\rangle \right|\label{eq-decomp-2} \\
		& + \lambda_{s,e} \left\langle \widehat{V}, \, \Theta_0 - \widetilde{\Theta}_{s:e}\right\rangle. \label{eq-decomp-3} 
	\end{align}
\\
For term \eqref{eq-decomp-1}, recalling the definition of $\Sigma(t)$ in \eqref{eq-Sigma-mat-def}, we have that
	\begin{align}
		\eqref{eq-decomp-1} =& \left|\Bigg\langle \frac{1}{e-s} \sum_{t = s+1}^e \Sigma(t), \, \widetilde{\Theta}_{s:e} - \Theta_0 \Bigg\rangle\right|	\leq \left\|\frac{1}{e-s} \sum_{t = s+1}^e \Sigma(t)\right\|_{\mathrm{op}} \left\|\widetilde{\Theta}_{s:e} - \Theta_0\right\|_* \nonumber \\
        \leq & \frac{\lambda_{s,e}}{3} \left\|\widetilde{\Theta}_{s:e} - \Theta_0\right\|_*, \label{eq-decomp-1-2}
	\end{align}
	due to the condition on $\lambda_{s,e}$ in \eqref{eq-prop-3-lambda-cond}.
\\
For term \eqref{eq-decomp-2}, we have that 
	\begin{align} \label{eq-decomp-2-2}
		\eqref{eq-decomp-2}	\leq \left\|\frac{1}{e-s} \sum_{t = s+1}^e \left(\widehat{\Theta} - \widetilde{\Theta}_{s:e}\right)_{\overline{\Omega(t)}}\right\|_{\mathrm{op}} \left\|\widetilde{\Theta}_{s:e} - \Theta_0\right\|_* \leq \frac{\lambda_{s,e}}{3} \left\|\widetilde{\Theta}_{s:e} - \Theta_0\right\|_*,
	\end{align}
	due to the stopping rule of \Cref{alg-main-klopp}.
\\
For term \eqref{eq-decomp-3}, due to the monotonicity of sub-differentials of convex functions, we have that for any $V \in \partial \|\Theta_0\|_*$, 
    \[
        \left\langle \widehat{V}, \Theta_0 - \widetilde{\Theta}_{s:e} \right\rangle \leq \left\langle V, \Theta_0 - \widetilde{\Theta}_{s:e} \right\rangle.
    \]
    We then have that
	\begin{align} \label{eq-decomp-3-2}
		\eqref{eq-decomp-3} \leq \lambda_{s,e} \left(\left\|\mathbf{P}_{\Theta_0} (\Theta_0 - \widetilde{\Theta}_{s:e})\right\|_* - \left\|\mathbf{P}_{\Theta_0} (\widetilde{\Theta}_{s:e})^{\perp}\right\|_*\right).
	\end{align}	
\\
Combining \eqref{eq-decomp-1-2}, \eqref{eq-decomp-2-2} and \eqref{eq-decomp-3-2}, recalling the notation introduced in \eqref{eq-mathbfP-def}, we have that
	\begin{align}\label{eq-frobenius-nuclear}
		\frac{1}{e-s} \sum_{t = s+1}^e \left\|\left(\widetilde{\Theta}_{s:e} - \Theta_0\right)_{\Omega(t)}\right\|_{\mathrm{F}}^2 \leq \frac{5\lambda_{s,e}}{3} \left\|\mathbf{P}_{\Theta_0}(\Theta_0 - \widetilde{\Theta}_{s:e})\right\|_* \leq \frac{5\lambda_{s,e}}{3} \sqrt{2\mathrm{rank}(\Theta_0)} \left\|\Theta_0 - \widetilde{\Theta}_{s:e}\right\|_{\mathrm{F}}
	\end{align}
	and
	\[
		\frac{\lambda_{s,e}}{3}\left\|\mathbf{P}_{\Theta_0}(\widetilde{\Theta}_{s:e})^{\perp}\right\|_* \leq \frac{5\lambda_{s,e}}{3} \left\|\mathbf{P}_{\Theta_0}(\Theta_0 - \widetilde{\Theta}_{s:e})\right\|_*.
	\]
	We therefore have that
	\[
		\left\|\widetilde{\Theta}_{s:e} - \Theta_0\right\|_* \leq 6\left\|\mathbf{P}_{\Theta_0}(\Theta_0 - \widetilde{\Theta}_{s:e})\right\|_* \leq \sqrt{72 \mathrm{rank}(\Theta_0)} \left\|\widetilde{\Theta}_{s:e} - \Theta_0\right\|_{\mathrm{F}}.
	\]

\medskip  

\noindent \textbf{Step 2.}  For any matrix $A \in \mathbb{R}^{n \times n}$, recall the notation given in \eqref{eq-weightedFrob-def}
	\[
		\|A\|^2_{L_2(\Pi), s:e} = \frac{1}{e-s} \sum_{t = s+1}^e \sum_{(i, j)} \Pi_{ij}(t) A_{ij}^2 \geq q_1 \|A\|^2_{\mathrm{F}}.
	\]	
	For absolute constants $C, c_1 > 0$, let
	\[
		R_{s:e} = \frac{C\log(3/\delta)}{c_1q_1 (e-s) \log(12)}.
	\]  
  We now consider two cases.
	
\medskip	
\noindent \textbf{Step 2.1.} If $\|\widetilde{\Theta}_{s:e} - \Theta_0\|^2_{L_2(\Pi), s:e} < R_{s:e}$, then we complete the result.

\medskip	
\noindent \textbf{Step 2.2.} If $\|\widetilde{\Theta}_{s:e} - \Theta_0\|^2_{L_2(\Pi), s:e} \geq R_{s:e}$, then we consider the following set
	\[
		\mathcal{C}(r, e-s) = \Big\{A \in \mathbb{R}^{n \times n}: \, \|A\|_{\infty} = 1, \, \|A\|_{L_2(\Pi), s:e}^2 \geq \frac{C\log(3/\delta)}{c_1q_1 (e-s) \log(12)}, \, \|A\|_* \leq \sqrt{r}\|A\|_{\mathrm{F}}\Big\}.
	\]
 Due to the conditions in \Cref{alg-main-klopp}, $\|\widetilde{\Theta}_{s:e} - \widehat{\Theta}\|_{\infty} \leq \vartheta$ and $\|\widehat{\Theta}\|_{\infty} \leq \vartheta$, which implies that $\|\widetilde{\Theta}_{s:e} - \Theta_0\|_{\infty} \leq \|\widetilde{\Theta}_{s:e} - \widehat{\Theta}\|_{\infty} + \|\widehat{\Theta}\|_{\infty} + \|\Theta_0\|_{\infty} \leq 3\vartheta$.  
	We then have that
	\[
		(3\vartheta)^{-1}(\widetilde{\Theta}_{s:e} - \Theta_0) \in \mathcal{C}(72\mathrm{rank}(\Theta_0), e-s).
	\]
	Due to \Cref{lem-9-klopp_dep}, with probability at least $1 - \delta$, one has that
	\begin{align*}
		& \|\widetilde{\Theta}_{s:e} - \Theta_0\|_{L_2(\Pi), s:e}^2\\
	        \leq & \frac{2}{e-s} \sum_{t = s+1}^e \|(\widetilde{\Theta}_{s:e} - \Theta_0)_{\Omega(t)}\|_{\mathrm{F}}^2 + C\frac{\vartheta ^2}{q_1}\mathrm{rank}(\Theta_0)\left\{\mathbb{E}\left[\|\Sigma_{R,s:e}\|_{\mathrm{op}}\right]\right\}^2\\
		\leq & \frac{10\lambda_{s,e}}{3} \sqrt{2\mathrm{rank}(\Theta_0)} \|\Theta_0 - \widetilde{\Theta}_{s:e}\|_{\mathrm{F}} + C\frac{\vartheta^2}{q_1}\mathrm{rank}(\Theta_0)\left\{\mathbb{E}\left[\|\Sigma_{R,s:e}\|_{\mathrm{op}}\right]\right\}^2\\
		\leq & \frac{C_1\lambda_{s,e}^2}{q_1} \mathrm{rank}(\Theta_0) + C_2q_1 \|\Theta_0 - \widetilde{\Theta}_{s:e}\|_{\mathrm{F}}^2 + C\frac{\vartheta^2}{q_1}\mathrm{rank}(\Theta_0)\left\{\mathbb{E}\left[\|\Sigma_{R,s:e}\|_{\mathrm{op}}\right]\right\}^2\\
  	\leq & \frac{C_1\lambda_{s,e}^2}{q_1} \mathrm{rank}(\Theta_0) + C_2\|\widetilde{\Theta}_{s:e} - \Theta_0\|_{L_2(\Pi), s:e}^2 + C\frac{\vartheta^2}{q_1}\mathrm{rank}(\Theta_0)\left\{\mathbb{E}\left[\|\Sigma_{R,s:e}\|_{\mathrm{op}}\right]\right\}^2,
	\end{align*}
where $C_1 > 0$ and $C_2 \in (0,1)$ are some absolute constant, the second inequality follows from \eqref{eq-frobenius-nuclear}, and the last inequality follows from \eqref{eq-weightedFrob-def}. 
\\
	Therefore we have that
	\[
		\|\widetilde{\Theta}_{s:e} - \Theta_0\|_{L_2(\Pi)}^2 \leq \frac{C}{q_1} \mathrm{rank}(\Theta_0)\left[\lambda_{s,e}^2 + \vartheta^2 \left\{\mathbb{E}\left[\|\Sigma_{R,s:e}\|_{\mathrm{op}}\right]\right\}^2\right].
	\]
	The final result holds due to the fact that $Q_{ij}(t) \geq q_1$, for any $(i, j) \in \{1, \ldots, n\}^{\otimes 2}$ and $t \in \mathbb{N}^*$.

\end{proof}

\subsubsection[]{Proof of \Cref{cor-main-klopp-cor-3}}
\begin{proof}[Proof of \Cref{cor-main-klopp-cor-3}]
This is an immediate consequence of \Cref{thm-main-klopp-cor-3} and \Cref{lem-lem4-klopp_dep}.
\end{proof}

\subsubsection{Change point analysis}

\begin{lemma}\label{lem-large-prob-event-2}
Under all the conditions and notation in \Cref{cor-main-klopp-cor-3}, define
    \begin{equation}\label{eq-event-G-def}
        \mathcal{E} = \left\{\forall s, t \in \mathbb{N}, \, t > s, \, t \geq 2: \, \|\widehat{\Theta}_{s:t} - \Theta_0\|_{\mathrm{F}} < \zeta_{s, t}\right\},
    \end{equation}
    where 
    \begin{align*}  
    \zeta_{s, t} = \sqrt{\frac{C_{\mathrm{noise}}rn\vartheta\log(n)\log(1/{\delta}_t)}{q_1^2(t-s)}}\sqrt{\max\left\{1,nq_2^2\right\}},
    \end{align*}
    with $\alpha \in (0, 1)$, 
    \begin{equation}\label{eq-lemma-4-delta-t-def}
        \delta_t = \frac{\alpha}{16 t^2}
    \end{equation}
    and $C_{\mathrm{noise}} > 0$ being a sufficiently large absolute constant.  It holds that
    \[
        \mathbb{P}\{\mathcal{E}\} \geq 1 - \alpha.
    \]
\end{lemma}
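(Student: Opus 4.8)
The plan is to realize $\mathcal{E}$ as an intersection of countably many events, one for each pair $(s,t)$ with $t > s$, $t \geq 2$, and to control the total failure probability by a union bound that telescopes because of the $\delta_t = \alpha/(8t^2)$ choice. First I would fix a pair $(s,t)$ and apply \Cref{cor-main-klopp-cor-3} with the sub-sample indexed by $\{s+1,\dots,t\}$, which has $e - s = t - s$ copies, and with the confidence parameter set to $\delta = \delta_t$; this requires checking that the $\lambda$ used in \Cref{alg-main-klopp} for the block $s\!:\!t$, namely $\lambda_{s,t} = C_\lambda (t-s)^{-1/2}\{m(n\rho)^{1/2} + \log^{1/2}(4/\delta_t)\}$, is of the form demanded by the proposition — here I am implicitly invoking the monotonicity/inflation remark in the main text so that using $\delta_t$ in place of $\delta$ is legitimate. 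The conclusion of \Cref{cor-main-klopp-cor-3} is exactly $\|\widehat{M}_{s:t} - M_1\|_{\mathrm{F}}^2 \leq C_{\mathrm{noise}} r\rho n m \log(1/\delta_t) / \{p^2(t-s)\} = \zeta_{s,t}^2$, so the single-pair bad event has probability at most $\delta_t$.

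Next I would assemble the union bound. Writing $\mathcal{E}^c = \bigcup_{t \geq 2}\bigcup_{s=1}^{t-1}\{\|\widehat{M}_{s:t} - M_1\|_{\mathrm{F}} \geq \zeta_{s,t}\}$, subadditivity gives
\[
\mathbb{P}\{\mathcal{E}^c\} \leq \sum_{t=2}^{\infty}\sum_{s=1}^{t-1}\delta_t = \sum_{t=2}^{\infty}(t-1)\,\frac{\alpha}{8t^2} \leq \frac{\alpha}{8}\sum_{t=2}^{\infty}\frac{1}{t} ,
\]
which unfortunately diverges, so the exponent $t^2$ alone is not enough against the inner sum over $s$; the fix is that one should only range $s$ over a dyadic grid (as the paper explicitly notes after \eqref{eq-widehat-delt-defi}, or equivalently absorb the $O(\log t)$ factor into $\delta_t$), giving $\sum_t \log_2(t)\,\alpha/(8t^2) \leq \alpha$ for $C$-adjusted constants; alternatively, keeping the full grid, one replaces $\delta_t = \alpha/(8t^2)$ by $\delta_t \asymp \alpha/t^3$ and the same $\zeta_{s,t}$ form survives up to constants since $\log(1/\delta_t)$ only changes by a bounded factor. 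I would present the version matching whichever convention \eqref{eq-lemma-4-delta-t-def} is actually used downstream, and note the harmless constant inflation.

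The main obstacle is not the union bound arithmetic but the \emph{independence/applicability} check: \Cref{cor-main-klopp-cor-3} is stated under \Cref{assume-no-change} with the common value denoted $M_0$, whereas here the common pre-change value is $M_1$, and — more delicately — the estimator $\widehat{M}_{s:t}$ for $s \geq 1$ is built from observations $\{Y(i),\Omega(i)\}_{i=s+1}^t$, all of which lie in the pre-change regime under \Cref{assume-no-change}, so the proposition applies verbatim with $M_0 = M_1$; I would spell this out to make clear that no cross-block dependence is invoked, since each bad event concerns a single block and we only need a union bound, not joint control. A secondary point worth a sentence: the hypotheses of \Cref{cor-main-klopp-cor-3} need $\delta_t \in (0,1)$, which holds for all $t \geq 2$ since $\alpha < 1$, and the absolute constant $C_{\mathrm{noise}}$ is the one fixed in \Cref{cor-main-klopp-cor-3}, matching the $C_{\mathrm{noise}}$ appearing in $\zeta_{s,t}$. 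With these checks in place the lemma follows.
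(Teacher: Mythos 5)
Your proposal is correct in substance and takes essentially the paper's route: apply \Cref{cor-main-klopp-cor-3} blockwise with a $t$-dependent confidence level and then union bound over all pairs $(s,t)$. Where you differ is in handling the divergence you correctly spotted: with $\delta_t=\alpha/(8t^2)$ the naive sum $\sum_{t}(t-1)\delta_t$ indeed diverges. The paper's remedy is to group $t$ into dyadic blocks $[2^j,2^{j+1})$, apply the per-pair tail at the deflated level $\widetilde{\delta}_t = \alpha\log^2(2)/[2\{\log(t)+\log(2)\}^2t^2]$, and sum $2^{2j+1}\widetilde{\delta}_{2^j}\le \alpha(j+1)^{-2}$ over $j$; it then transfers back to the stated $\zeta_{s,t}$ by asserting $\delta_t\le\widetilde{\delta}_t$. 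That assertion actually fails for $t\ge 3$, but the discrepancy is only a $\log\log$-type factor inside $\log(1/\delta_t)$ and is absorbed into the ``sufficiently large'' $C_{\mathrm{noise}}$ --- which is precisely your second fix (take $\delta_t\asymp\alpha/t^3$ and note that $\log(1/\delta_t)$ changes by at most a bounded factor). Keep that version; it is, if anything, cleaner than the paper's. Discard your first fix (restricting $s$ to a dyadic grid): the event $\mathcal{E}$ is invoked in the proof of \Cref{thm-upper-bound} for arbitrary $1\le s<t$, so thinning the grid in $s$ would establish a strictly weaker event than the lemma claims. Finally, the $\lambda$ mismatch you flag (the theorem feeds \Cref{alg-main-klopp} the penalty with $\log^{1/2}(4/\alpha)$, whereas applying \Cref{cor-main-klopp-cor-3} at level $\delta_t<\alpha$ formally requires $\log^{1/2}(4/\delta_t)$, so this is an under- rather than over-estimate and the inflation remark does not directly cover it) is a genuine wrinkle, but it is present in the paper's own argument as well and is not specific to your write-up.
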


\begin{proof}[Proof of \Cref{lem-large-prob-event-2}]
Let $\mathcal{E}^c$ be the complement of $\mathcal{E}$.  We thus have that
    \begin{align}
        \mathbb{P}\{\mathcal{E}^c\} & = \mathbb{P}\left\{\exists s, t \in \mathbb{N}, \, t > s, \, t \geq 2 : \, \|\widehat{\Theta}_{s:t} - \Theta_0\|_{\mathrm{F}} \geq \zeta_{s, t}\right\}  \nonumber \\
        & \leq \sum_{j = 1}^{\infty} \mathbb{P}\left\{\max_{2^j \leq t < 2^{j+1}} \max_{0 \leq s < t} \left\{\|\widehat{\Theta}_{s:t} - \Theta_0\|_{\mathrm{F}} - \zeta_{s, t}\right\} \geq 0\right\} \nonumber \\
        & \leq \sum_{j = 1}^{\infty} 2^j \max_{2^j \leq t < 2^{j+1}} \mathbb{P}\left\{ \max_{0 \leq s < t} \left\{\|\widehat{\Theta}_{s:t} - \Theta_0\|_{\mathrm{F}} - \zeta_{s, t} \right\} \geq 0 \right\} \nonumber \\
        & \leq \sum_{j = 1}^{\infty} 2^{2j+1} \max_{2^j \leq t < 2^{j+1}} \max_{0 \leq s < t} \mathbb{P}\left\{\|\widehat{\Theta}_{s:t} - \Theta_0\|_{\mathrm{F}} \geq \zeta_{s, t}\right\}. \label{eq-lemma-s4-pf-1}
    \end{align}
Due to \Cref{cor-main-klopp-cor-3}, for any integer pair $0 \leq s < t$, such that the interval length $(e-s)$ satisfying \begin{align*}
    \frac{\sqrt{e-s}}{(\log(e-s))^2} \geq C\sqrt{n\log(n)},
\end{align*}
we have that for any $\delta > 0$,
\begin{align*}
\mathbb{P}\Bigg(\|\widehat{\Theta}_{s:t} - \Theta_0\|_{\mathrm{F}} \geq \sqrt{\frac{C_{\mathrm{noise}}rn\vartheta \log(n)\log(1/\delta)}{q_1^2(t-s)}}\sqrt{\max\left\{1,nq_2^2\right\}}\Bigg) < 2\delta
\end{align*}
    Choosing 
    \begin{align*}  
    \widetilde{\zeta}_{s, t} = \sqrt{\frac{C_{\mathrm{noise}}rn\vartheta\log(n)\log(1/\widetilde{\delta}_t)}{q_1^2(t-s)}}\sqrt{\max\left\{1,nq_2^2\right\}},
    \end{align*}
    and
    \[\widetilde{\delta}_t = \frac{\alpha \log^2(2)}{4 \{\log(t) + \log(2)\}^2 t^2},
    \]
    due to \eqref{eq-lemma-s4-pf-1}, we have that
    \begin{align*}
        \mathbb{P}\{\mathcal{E}^c\} & \leq \sum_{j = 1}^{\infty} 2^{2j+1} \max_{2^j \leq t < 2^{j+1}} \max_{0 \leq s < t} \mathbb{P}\left\{\|\widehat{\Theta}_{s:t} - \Theta_0\|_{\mathrm{F}} \geq \widetilde{\zeta}_{s, t}\right\}\\
        & \leq \sum_{j = 1}^{\infty} 2^{2j+1} \max_{2^j \leq t < 2^{j+1}} \max_{0 \leq s < t}  \frac{\alpha \log^2(2)}{2 \{\log(t) + \log(2)\}^2 t^2} \\
        & \leq \alpha \sum_{j = 1}^{\infty} 2^{2j+1} \frac{1}{2\left\{\frac{\log(2^j)}{\log(2)} + 1\right\}^2 2^{2j}} \leq \alpha \sum_{j = 1}^{\infty} \frac{1}{(j+1)^2} \leq \alpha \sum_{j = 1}^{\infty} \frac{1}{j(j+1)} = \alpha,
    \end{align*}
    where the first inequality holds since $\delta_t \leq \widetilde{\delta}_t$, with $t \geq 2$, $\delta_t$ defined in \eqref{eq-lemma-4-delta-t-def}.
\end{proof}

\subsubsection[]{Proof of \Cref{thm-upper-bound}}

\begin{proof}[Proof of \Cref{thm-upper-bound}]
The proof is conducted in the event $\mathcal{E}$, with the event $\mathcal{E}$ defined as
\[
    \mathcal{E} = \left\{\forall s, t \in \mathbb{N}, \, t > s, \, t \geq 2: \, \|\widehat{\Theta}_{s:t} - \Theta_0\|_{\mathrm{F}} < \zeta_{s, t}\right\}.
\]
It follows from \Cref{lem-large-prob-event-2} that $\mathbb{P}\{\mathcal{E}\} > 1 - \alpha$.  

 	\begin{align*}
	\|\widehat{\Theta}_{s:e} - \Theta_0\|_{\mathrm{F}} \leq \sqrt{\frac{C_{\mathrm{noise}}rn\vartheta\log(n)\log(1/\delta)}{q_1^2(e-s)}}\sqrt{\max\left\{1,nq_2^2\right\}},
	\end{align*}

For any $t \leq \Delta$, we have that 
    \begin{align*}
        & \|\widehat{\Theta}_{0:s} - \widehat{\Theta}_{s:t}\|_{\mathrm{F}} \leq \|\widehat{\Theta}_{0:s} - \Theta_{\Delta}\|_{\mathrm{F}} + \|\widehat{\Theta}_{s:t} - \Theta_{\Delta}\|_{\mathrm{F}} \\
        \leq & \left(\sqrt{\frac{C_{\mathrm{noise}}rn\vartheta\log(n)\log(32s^2/\alpha)}{q_1^2s}} + \sqrt{\frac{C_{\mathrm{noise}}rn\vartheta\log(n)\log(32t^2/\alpha)}{q_1^2(t-s)}}\right) \sqrt{\max\{1, nq_2^2\}}\\
        \leq & C_{\varepsilon} \frac{\sqrt{ r\vartheta n\log(n)\max\{1, nq_2^2\}} }{q_1}\left(\sqrt{\frac{\log(s/\alpha)}{s}} + \sqrt{\frac{\log(t/\alpha)}{t-s}}\right)\\
        =& \varepsilon_{s, t},
    \end{align*}
    where $C_{\varepsilon} > 0$ is a large enough absolute constant depending only on $C_{\mathrm{noise}}$.  We thus have that $\widehat{\Delta} > t$ and consequently $\widehat{\Delta} > \Delta$.

Now we consider $t > \Delta$.  Let 
    \[
        \widetilde{\Delta} = \min\{t > \Delta: \, \|\widehat{\Theta}_{0: \Delta} - \widehat{\Theta}_{\Delta: t}\|_{\mathrm{F}}^2 \geq \varepsilon_{\Delta, t}\}.
    \]
    From the definition of $\widehat{\Delta}$, it holds that $\widetilde{\Delta} \geq \widehat{\Delta}$.  It now suffices to upper bound $(\widetilde{\Delta} - \Delta)_+$.
    
On the event $\mathcal{E}$, we have that
    \begin{align}
        & \|\widehat{\Theta}_{0:\Delta} - \widehat{\Theta}_{\Delta: \widetilde{\Delta}}\|_{\mathrm{F}} = \|\widehat{\Theta}_{0:\Delta} - \Theta_{\Delta} + \Theta_{\Delta} - \Theta_{\Delta + 1} - \Theta_{\Delta+1} - \widehat{\Theta}_{\Delta: \widetilde{\Delta}}\|_{\mathrm{F}} \nonumber \\
        \geq & \|\Theta_{\Delta} - \Theta_{\Delta+1}\|_{\mathrm{F}}  - \left(\|\widehat{\Theta}_{0:\Delta} - \Theta_{\Delta}\|_{\mathrm{F}} + \|\widehat{\Theta}_{\Delta: \widetilde{\Delta}} - \Theta_{\Delta+1}\|_{\mathrm{F}} \right) \nonumber \\
        \geq & \kappa -  C_{\varepsilon} \frac{\sqrt{ r\vartheta n\log(n)\max\{1, nq_2^2\}} }{q_1}\Bigg(\sqrt{\frac{\log(\Delta/\alpha)}{\Delta}}  + \sqrt{\frac{\log\{(d+\Delta)/\alpha\}}{d}}\Bigg), \label{eq-proof-main-11}
    \end{align}
    where we let $d = \widetilde{\Delta} - \Delta$ in the last inequality.
    It now suffices to show that with the choice of 
\begin{align*}
    d = C_d  \frac{r\vartheta n\log(n)\log(\Delta/\alpha)}{q_1^{2}\kappa^2}\max\{1, nq_2^2\},
\end{align*}
    where $C_d > 0$ is a large enough absolute constant depending only on $C_{\varepsilon}$, it holds that $\eqref{eq-proof-main-11} \geq \varepsilon_{\Delta, \Delta + d}$, which is equivalent to 
    \[
        \kappa \geq 2\varepsilon_{\Delta, \Delta + d}.
    \]
    
Due to Assumption~\ref{assume-snr}, i.e.~\begin{align*}
    \kappa^2 \Delta \geq C_{\mathrm{SNR}}\frac{r\vartheta n\log(n)\log(\Delta/\alpha)}{q_1^{2}}\max\{1, nq_2^2\},
\end{align*}
    where $C_{\mathrm{SNR}} \geq C_d$ is a large enough absolute constant, we have that $\Delta > d$ and $\Delta > n$, which leads to 
\begin{align*}
    &2\varepsilon_{\Delta, \Delta+d}\\
    =& 2C_{\varepsilon}\frac{\sqrt{r\vartheta n\log(n)}}{q_1} \left(\sqrt{\frac{\log(\Delta/\alpha)}{\Delta}} + \sqrt{\frac{\log(\Delta+d/\alpha)}{d}}\right)\sqrt{\max\{1, nq_2^2\}}\\
    \leq&  2C_{\varepsilon}\frac{\sqrt{r\vartheta n\log(n)}}{q_1} \left(\sqrt{\frac{\log(\Delta/\alpha)}{\Delta}} + \sqrt{\frac{\log(2\Delta/\alpha)q_1^{2}\kappa^2}{C_d r\vartheta n\log(n)\log(\Delta/\alpha)\max\{1, nq_2^2\}}}\right)\sqrt{\max\{1, nq_2^2\}}\\ 
    \leq& \kappa,
\end{align*}
where the last inequality holds for a large enough $C_{SNR}$ depending on $C_{\epsilon}$ and $C_d$.
\end{proof}

\subsection[]{Proof of \Cref{lemma:ergodic}}
\begin{proof}[Proof of \Cref{lemma:ergodic}]
    Note that for any $i \in \{1, \ldots, n\}$, $\{X_i(t)\}_{t \in \mathbb{N}^*}$ is a Markov chain with state space $\mathcal{X}$. By Theorem 3.7 in \cite{bradley2005basic}, it suffices to verify Doeblin’s condition \citep[e.g.][]{rosenthal1995convergence}, which holds on the transition probability $\mathbb P(\cdot, \cdot)$, if there exists an $\epsilon > 0$ and a probability measure $\mu(\cdot)$, such that for all $x \in \mathcal{X}$ and measurable subsets $A \subseteq \mathcal{X}$,
    $\mathbb P(x, A) \geq \epsilon\mu(A)$.

    We write \eqref{eq:markov_latentpostion} equivalently as
    \[
        X_i(t) = (1-U)\cdot X_i(t-1) + U\cdot Z,
    \]
    where $U \sim \mathrm{Bernoulli}(1-\rho)$, $Z \sim F$, and $U$ is independent of $X_i(t-1)$ and $Z$. We have for any $A \subseteq \mathcal{X}$
    \begin{align*}
        \mathbb P(x, A) =& \mathbb P(X_i(t) \in A| X_i(t-1) = x)\\
        =& \mathbb P(X_i(t) \in A, U = 1| X_i(t-1) = x)\\
        &+ \mathbb P(X_i(t) \in A, U = 0| X_i(t-1) = x)\\
        \geq& \mathbb P(X_i(t) \in A, U = 1| X_i(t-1) = x)\\
        =& \mathbb P(Z \in A, U = 1)\\
        =& (1-\rho)\mathbb P(Z \in A).
    \end{align*}
    Thus, we have $\epsilon = 1-\rho > 0$, which concludes the proof.
\end{proof}

\section{Auxiliary lemmas}
This section collects some existing results which are used in our proofs. We use $\phi$-mixing throughout our paper. Howeve, there are other variants of mixing coefficients, such as the $\alpha-$ and $\beta$-mixing coefficients \cite[e.g.][]{bradley2005basic,doukhan2012mixing}. Note that the conditions based on the $\alpha$-mixing are weaker than that of $\beta$-mixing, which are further weaker than that of $\phi$-mixing. More precisely, it follows that $2\alpha(\ell)  \leq \beta(\ell) \leq \phi(\ell)$ for any $\ell \in \mathbb{N}$ \cite[e.g.][]{bradley2005basic}.

\begin{lemma}[Theorem 5.2 in \citealt{bradley2005basic}]\label{lemma:mixing_indep_comp}
Denote $X^{(k)} = \{X^{(k)}(t)\}_{t \in \mathbb{N}^*}$, for $k \in \mathbb N^*$. Suppose these sequences, $X^{(k)}$, are mutually independent across $k$. Suppose that for each $t \in \mathbb{N}^*$, $h_t: \mathbb{R}\times \mathbb{R} \times \mathbb{R} \times \dots \mapsto \mathbb{R}$ is a Borel function. Define the sequence $X = \{X(t)\}_{t \in \mathbb{N}^*}$ of random variables with $X(t) = h_t\big(X^{(1)}(t),X^{(2)}(t),X^{(3)}(t),\dots\big)$, for $t \in \mathbb{N}^*$. Then for each $\ell \geq 1$, it follows that $\phi_X(\ell) \leq 1-\prod_{k=1}^{\infty}\big(1-\phi_{X^{(k)}}(\ell)\big) \leq \sum_{k = 1}^{\infty}\phi_{X^{(k)}}(\ell)$.
\end{lemma}

\begin{lemma}[Theorem 1 in \cite{banna2016bernstein}]\label{lemma:matrix_bernstein}
Let $\{M_i\}_{i \in \mathbb{N}^*}$ is a family of self-adjoint random matrices of size $d$. Assume that there exists a constant $c > 0$ such that for any $\ell \geq 1$, $\beta_M(\ell) \leq \exp(1-c\ell)$, and there exist a positive constant $D$ such that for any $i \in \mathbb{N}^*$,
\begin{align*}
    \mathbb{E}[M_i] = 0 \quad \text{and} \quad \|M_i\|_{\mathrm{op}} \leq D \quad \text{almost surely}.
\end{align*}
Then there exists an absolute constant $C$ such that for any $x > 0$ and any integer $n \geq 2$,
\[
\mathbb{P}\left(\left\|\sum_{i=1}^nM_i\right\|_{\mathrm{op}} \geq x\right) \leq d\exp\left(-\frac{Cx^2}{\nu^2n + c^{-1}D^2 + xD\gamma(c,n)}\right),
\]
where
\[
\nu^2 = \sup_{\mathcal{K} \subseteq\{1, \dots, n\}}\frac{1}{|\mathcal{K}|}\left\|\mathbb{E}\left(\sum_{i \in \mathcal{K}}M_i\right)^2\right\|_{\mathrm{op}}
\]
and
\[
\gamma(c,n) = \frac{\log n}{\log 2}\max\left\{2, \frac{32\log n}{c\log 2}\right\}.
\]
\end{lemma}

\begin{lemma}[Proposition 2.6.1 in \citealt{vershynin2018high}]\label{lemma:subGaussian_ind_sum}
    Let $X_1, \dots, X_n$ be independent, mean zero, sub-Gaussian random variables. Then $\sum_{i=1}^n X_i$ is also a sub-Gaussian random variable, and
    \[
        \Big\|\sum_{i=1}^n X_i \Big\|_{\psi_2}^2 \leq C\sum_{i = 1}^n\|X_i\|_{\psi_2}^2,
    \]
    where $C > 0$ is a universal constant.
\end{lemma}

\begin{lemma}[Theorem 4.4.5 \& Corollary 4.4.8 in \citealt{vershynin2018high}]\label{thm:op_norm_tail}
    (a) Let $X \in \mathbb{R}^{m \times n}$ be a random matrix whose entries $X_{ij}$ are independent, centred and sub-Gaussian. Then, for any $t > 0$, we have that
    \[
        \|X\|_{\mathrm{op}} \leq CK(\sqrt{m} + \sqrt{n} + t),
    \]
    with probability at least $1 - \exp(-t^2)$, and $K = \max_{i,j}\|X_{ij}\|_{\psi_2}$.

    (b) Let $X \in \mathbb{R}^{n \times n}$ be a symmetric random matrix whose upper diagonal entries $X_{ij}$ are independent, centred and sub-Gaussian. Then, for any $t > 0$, we have that
    \[
        \|X\|_{\mathrm{op}} \leq CK(\sqrt{n} + t),
    \]
    with probability at least $1 - \exp(-t^2)$, and $K = \max_{i,j}\|X_{ij}\|_{\psi_2}$.
\end{lemma}

The next theorem gives a variant of Talagrand's concentration inequality under $\phi$-mixing condition.
\begin{lemma}[Corollary 4 in \citealt{samson2000concentration}]\label{thm-talagrand_dep}
Suppose that $f: \, [-1, 1]^n \to \mathbb{R}$ is convex and $L$-Lipschitz continus with respect to the $l_2$-norm.  Let $\epsilon_1, \ldots, \epsilon_n$ be a sequence of random variables taking values in $[-1, 1]$.  Let $Z = f(\epsilon_1, \ldots, \epsilon_n)$.  Then for any $t \geq 0$, it holds that 
	\[
		\mathbb{P}\left\{|Z - \mathbb{E}(Z)| \geq t\right\} \leq 2\exp\left\{-\frac{t^2}{2L^2\|\Gamma\|^2_{\mathrm{op}}}\right\},
	\]
where $\Gamma = (\gamma_{ij})_{1\leq i,j \leq n}$ is a lower triangular matrix having $1$s along its diagonal. For $1 \leq i < j \leq n$, let $\epsilon_i^j$ represent the vector $(\epsilon_i, \dots, \epsilon_j)$ and $\mathcal{L}(\epsilon_{j}^n|\epsilon_1^i = y_1^i)$ denote the law of $\epsilon_j^n$ conditionally to $\epsilon_1^i = y_1^i$. Then set
\begin{align*}
    (\gamma_{ij})^2 = 2\esssup_{y_1^i \in \mathbb R^i, \mathcal{L}(\epsilon_1^i)} \left\|\mathcal{L}\left(\epsilon_{j}^n|\epsilon_1^i = y_1^i\right) - \mathcal{L}\left(\epsilon_j^n\right)\right\|_{\mathrm{TV}},
\end{align*}
where $\esssup_{y_1^i \in \mathbb R^i, \mathcal{L}(\epsilon_1^i)}$ is the essential supremum with respect to the measure $\mathcal{L}(\epsilon_1^i)$ and $\|\cdot\|_{\mathrm{TV}}$ is the total variation of a signed measure. 

Further assume that the $\phi$-mixing coefficients of $\epsilon_1, \ldots, \epsilon_n$ decay exponentially, i.e.~there exists some $C > 0$ and $\rho \in (0,1)$ such that $\phi(\ell) \leq C\rho^\ell$, for any $\ell \in \mathbb{N}$, then it follows that $\|\Gamma\|_{\mathrm{op}} \leq \sum_{\ell = 1}^{\infty}\sqrt{\phi(\ell)} < \infty$.
\end{lemma}

\begin{remark}
Note that the above lemma is stated under the $\phi$-mixing conditions, which are stronger than the $\alpha$ or $\beta$-mixing conditions with the same decay rate. This lemma is the main reason for which we impose the $\phi$-mixing conditions. To the best of our knowledge, it is still unknown if the Talagrand's concentration inequality could be generalised under the more general $\alpha-$ or $\beta$-mixing conditions.
\end{remark}

For $r \in \mathbb{N}^*$ and integer pair $1 \leq s < e$, define the set 
    \begin{align}\label{eq-def-c-r-es}
		\mathcal{C}(r, e-s) = \Big\{A \in \mathbb{R}^{n \times n}: \, \|A\|_{\infty} = 1, \, \|A\|_{L_2(\Pi), s:e}^2 \geq \frac{C\log(3/\delta)}{c_1q_1 (e-s) \log(12)}, \, \|A\|_* \leq \sqrt{r}\|A\|_{\mathrm{F}}\Big\},
	\end{align}
    where for any matrix $A \in \mathbb{R}^{n \times n}$, the norm $\|A\|_{L_2(\Pi), s:e} = (e-s)^{-1}\sum_{t = s+1}^e\|A\|_{L_2(\Pi(t))}$ is defined in \eqref{eq-weightedFrob-def},
    and $\{\Pi(t)\}_{t \in \mathbb N^*}$ is a sequence of missingness probability matrices defined in \Cref{def-sample}. Recall that 
    \[
        \Sigma_{R,s:e} = \frac{1}{e-s}\sum_{t = s+1}^e\Sigma_R(t),
    \]
    where $\Sigma_R(t)$ is defined in \eqref{eq-Sigma-prime-maat-def}.

\begin{lemma}\label{lem-9-klopp_dep}
For any integer pair $1 \leq s < e$ and for any $A \in \mathcal{C}(r, e-s)$ as defined in \eqref{eq-def-c-r-es}, it holds that
    \begin{align*}
	&\mathbb{P}\left(\frac{1}{2}\|A\|^2_{L_2(\Pi), s:e} - \frac{54r}{q_1} \left\{\mathbb{E}\left[\|\Sigma_{R, s:e}\|_{\mathrm{op}}\right]\right\}^2 \leq \frac{1}{e-s} \sum_{t = s+1}^e \|A_{\Omega(t)}\|_{\mathrm{F}}^2\right)
    \geq 1-\delta.
	\end{align*}
\end{lemma}
\begin{proof}[Proof of \Cref{lem-9-klopp_dep}]
Let
	\[
		R = \frac{54r}{q_1}\left\{\mathbb{E}\left[\|\Sigma_{R, s:e}\|_{\mathrm{op}}\right]\right\}^2,
	\]
	and the event
	\[
		\mathcal{E} = \left\{\exists A \in \mathcal{C}(r, e-s) \,\mbox{ s.t. } \left|\frac{1}{e-s}\sum_{t = s+1}^e \|A_{\Omega(t)}\|_{\mathrm{F}}^2 - \|A\|^2_{L_2(\Pi), s:e}\right| > \frac{1}{2} \|A\|^2_{L_2(\Pi), s:e} + R\right\}.
	\]
	We are to show that $\mathcal{E}$ holds with small probability. 
In order to estimate the probability of $\mathcal{E}$, we use a peeling argument.  Let $\alpha = 12$ and $v > 0$ to be specified.  For $l \in \mathbb{N}^*$, set
	\[
		\mathcal{S}_l = \left\{A \in \mathcal{C}(r, e-s): \, \alpha^{l-1} v \leq \|A\|^2_{L_2(\Pi), s:e} \leq \alpha^l v\right\}.
	\]	
	If the event $\mathcal{E}$ holds for some matrix $A \in \mathcal{C}(r, e-s)$, then there exists $l \in \mathbb{N}$ such that $A \in \mathcal{S}_l$ and 
	\begin{align*}
		\left|\frac{1}{e-s}\sum_{t = s+1}^e \|A_{\Omega(t)}\|_{\mathrm{F}}^2 - \|A\|^2_{L_2(\Pi), s:e}\right| > \frac{1}{2} \|A\|^2_{L_2(\Pi), s:e} + R \geq \frac{1}{2} \alpha^{l-1} v + R = \frac{1}{24} \alpha^l v + R.
	\end{align*}
	For $W > v$, consider the following set
	\[
		\mathcal{C}(r, e-s, W) = \left\{A \in \mathcal{C}(r, e-s): \, \|A\|^2_{L_2(\Pi), s:e} \leq W\right\}
	\]
	and the following event
	\[
		\mathcal{E}_l = \left\{\exists A \in \mathcal{C}(r, e-s, \alpha^l v):\, \left|\frac{1}{e-s}\sum_{t = s+1}^e \|A_{\Omega(t)}\|_{\mathrm{F}}^2 - \|A\|^2_{L_2(\Pi), s:e}\right| > \frac{1}{24} \alpha^l v + R\right\}.
	\]
	Note that $A \in \mathcal{S}_l$ implies that $A \in \mathcal{C}(r, e-s, \alpha^l v)$, and consequently that $\mathcal{E} \subset \cup_{l \in \mathbb{N}^*} \mathcal{E}_l$.  This means that it suffices to estimate the probability of $\mathcal{E}_l$ and the apply a union bound argument.  
	
Let 
	\[
		Z_{W, s:e} = \sup_{A \in \mathcal{C}(r, e-s, W)} \left|\frac{1}{e-s}\sum_{t = s+1}^e \|A_{\Omega(t)}\|_{\mathrm{F}}^2 - \|A\|^2_{L_2(\Pi), s:e}\right|.
	\]
	Due to \Cref{lem-lem10-klopp_dep}, we have that 
	\[
		\mathbb{P}(\mathcal{E}_l) \leq 2\exp\left(-C_1q_1 (e-s) \alpha^l v\right).  
	\]
	Using a union bound argument, we have that
	\begin{align*}
		\mathbb{P}(\mathcal{E}) & \leq \sum_{l = 1}^{\infty} \mathbb{P}(\mathcal{E}_l) \leq 2\sum_{l = 1}^{\infty}\exp\{-C_1q_1 (e-s) \alpha^l v\} \leq 2\sum_{l = 1}^{\infty}\exp\{-C_1q_1(e-s) \log(\alpha) l v\} \\
		& \leq \frac{2\exp\{-C_1q_1(e-s) \log(\alpha) v\}}{1 - \exp\{-C_1q_1(e-s) \log(\alpha)v\}}.
	\end{align*}
	Set
	\[
		v = \frac{\log(3/\delta)}{C_1q_1(e-s) \log(\alpha)}.
	\]
	Since $\delta \in (0,1)$, we have that
	\[
		\mathbb{P}(\mathcal{E}) \leq \delta.
	\]
\end{proof}

\begin{lemma}\label{lem-lem10-klopp_dep}
Suppose the sequence of missingness matrices $\{\Omega(t)\}_{t \in \mathbb{N}} \subset \mathbb{R}^{n \times n}$ and the sequence of missingness probability matrices $\{\Pi(t)\}_{t \in \mathbb{N}^*} \subset \mathbb{R}^{n \times n}$ satisfy Assumption~\ref{assume-model-new}. For any integer pair $1 \leq s < e$ and any $W > 0$, let 
	\[
		Z_{W, s:e} = \sup_{A \in \mathcal{C}(r, e-s, W)} \left|\frac{1}{e-s}\sum_{t = s+1}^e \|A_{\Omega(t)}\|_{\mathrm{F}}^2 - \|A\|^2_{L_2(\Pi), s:e}\right|
	\]
	and
	\[
		\mathcal{C}(r, e-s, W) = \left\{A \in \mathcal{C}(r, e-s): \, \|A\|^2_{L_2(\Pi), s:e} \leq W\right\}.
	\]
	We have that
\begin{align*}
		&\mathbb{P}\left(Z_{W, e:s} \geq \frac{W}{24} + \frac{54r}{q_1}\left\{\mathbb{E}\left[\|\Sigma_{R, s:e}\|_{\mathrm{op}}\right]\right\}^2\right)
  \leq 2 \exp\left(-C q_1(e-s)W\right).
	\end{align*}
	where $C > 0$ is some absolute constant.
\end{lemma}

\begin{proof}[Proof of \Cref{lem-lem10-klopp_dep}]
\
\\
\noindent\textbf{Step 1.} We will start by showing that $Z_{W, s:e}$ concentrates around its expectation and then upper bound the conditional expectation.  Recall that
	\begin{align*}
		Z_{W, s:e} & = \sup_{A \in \mathcal{C}(r, e-s, W)} \left|\frac{1}{e-s}\sum_{t = s+1}^e \|A_{\Omega(t)}\|_{\mathrm{F}}^2 - \|A\|^2_{L_2(\Pi), s:e}\right| \\
		& = \sup_{A \in \mathcal{C}(r, e-s, W)} \left|\frac{1}{e-s} \sum_{t = s+1}^e \sum_{(i, j)} \Omega_{ij}(t)A^2_{ij} - \frac{1}{e-s} \sum_{t = s+1}^e \sum_{(i, j)} \Pi_{ij}(t) A_{ij}^2\right| \\
		& = \sup_{A \in \mathcal{C}(r, e-s, W)} \left|\frac{1}{e-s} \sum_{t = s+1}^e \sum_{(i, j)} \left\{\Omega_{ij}(t) - \Pi_{ij}(t)\right\}A^2_{ij} \right|\\
        & \leq \sup_{A \in \mathcal{C}(r, e-s, W)} \left|\frac{1}{e-s} \sum_{t = s+1}^e \sum_{i > j} \left\{\Omega_{ij}(t) - \Pi_{ij}(t)\right\}A^2_{ij} \right|\\
        & \quad + \sup_{A \in \mathcal{C}(r, e-s, W)} \left|\frac{1}{e-s} \sum_{t = s+1}^e \sum_{i \leq j} \left\{\Omega_{ij}(t) - \Pi_{ij}(t)\right\}A^2_{ij} \right|.
	\end{align*}	
We are to apply \Cref{thm-talagrand_dep}. Note that for each $t$, $\Omega(t)$ is symmetric and with upper diagonal entries being independent. It follows from \Cref{assume-model-new}c., across $t$, $\{\Omega_{ij}(t)\}_{t \in \mathbb{N}^*}$ is $\phi$-mixing with coefficients satisfy that $\phi_{\Omega}^{ij}(\ell) \leq 2\rho^{\ell}$ for any $\ell \in \mathbb N$ and some $\rho \in [0,1)$.. Thus, the sequence $\{\Omega_{ij}(t)\}_{1 \leq i \leq j \leq n, t \in \mathbb{N}^*}$ is also  $\phi$-mixing.   Let
	\[
		f(x_{ij,t};\; 1 \leq i \leq j  \leq n;\; t = s+1, \ldots, e) = \frac{1}{e-s}\sup_{A \in \mathcal{C}(r, e-s, W)} \left|\sum_{t = s+1}^e \sum_{i \leq j} \{x_{ij, t} - \Pi_{ij}(t)\} A_{ij}^2\right|.
	\]
	We first show that $f(\cdot)$ is a Lipschitz function with the constant $\sqrt{W/\{(e-s)q_1\}}$. It follows that
	\begin{align*}
		& |f(x_{ij, t}) - f(z_{ij, t})| \\
		= & \frac{1}{e-s} \left|\sup_{A \in \mathcal{C}(r, e-s, W)} \left|\sum_{t = s+1}^e \sum_{i \leq j} \{x_{ij, t} - \Pi_{ij}(t)\} A_{ij}^2\right| - \sup_{A \in \mathcal{C}(r, e-s, W)} \left|\sum_{t = s+1}^e \sum_{i \leq j} \{z_{ij, t} - \Pi_{ij}(t)\} A_{ij}^2\right|\right| \\
		\leq & \frac{1}{e-s} \sup_{A \in \mathcal{C}(r, e-s, W)} \left|\left|\sum_{t = s+1}^e \sum_{i \leq j} \{x_{ij, t} - \Pi_{ij}(t)\} A_{ij}^2\right| - \left|\sum_{t = s+1}^e \sum_{i \leq j} \{z_{ij, t} - \Pi_{ij}(t)\} A_{ij}^2\right|\right| \\
		\leq & \frac{1}{e-s} \sup_{A \in \mathcal{C}(r, e-s, W)} \left|\sum_{t = s+1}^e \sum_{i \leq j}  \{x_{ij, t} - z_{ij}(t)\} A_{ij}^2\right| \\
		\leq & \frac{1}{\sqrt{e-s}} \sup_{A \in \mathcal{C}(r, e-s, W)} \sqrt{\frac{1}{e-s}\sum_{t = s+1}^e \sum_{i \leq j} \Pi_{ij}(t) A_{ij}^4} \sqrt{\sum_{t = s+1}^e \sum_{i \leq j} (\Pi_{ij}(t))^{-1} (x_{ij, t} - z_{ij, t})^2} \\
		\leq & \frac{1}{\sqrt{(e-s)q_1}}\sup_{A \in \mathcal{C}(r, e-s, W)} \sqrt{\frac{1}{e-s}\sum_{t = s+1}^e \sum_{i \leq j} \Pi_{ij}(t) A_{ij}^2} \sqrt{\sum_{t = s+1}^e \sum_{i \leq j} (x_{ij, t} - z_{ij, t})^2}  \\
		\leq & \sqrt{\frac{W}{(e-s)q_1}} \sqrt{\sum_{t = s+1}^e \sum_{i \leq j} (x_{ij, t} - z_{ij, t})^2}.
	\end{align*}
    Similar argument applies to the lower diagonal entries, i.e.~$i > j$.
    Then \Cref{thm-talagrand_dep} leads to that
	\[
		\mathbb{P}\left(Z_{W, e:s} \geq \mathbb{E}[Z_{W, e:s}] + t\right) \leq 2\exp\left(-\frac{t^2 (e-s)q_1}{2\|\Gamma\|^2_{\mathrm{op}}W}\right),
	\]	
 where $\|\Gamma\|_{\mathrm{op}} < \infty$ under Assumption~\ref{assume-model-new}c.
	Taking $t = W/27$, we have that
	\[
		\mathbb{P}\left(Z_{W, e:s} \geq \mathbb{E}[Z_{W, e:s}] + W/27\right) \leq 2 \exp(-C q_1(e-s)W),
	\]
	with some absolute constant $C > 0$.

\medskip	
\noindent\textbf{Step 2.}  Next, we bound the expectation $\mathbb{E}[Z_{W,s:e}]$. It follows that
    \begin{align*}
		\mathbb{E}\left[Z_{W, s:e}\right] & = \frac{1}{e-s}\mathbb{E}\left[\sup_{A \in \mathcal{C}(r, s:e, W)} \left|\sum_{t = s+1}^e \sum_{(i, j)}\left\{\Omega_{ij}(t) - \Pi_{ij}(t)\right\} A^2_{ij}\right|\right]	\\
        &= \mathbb{E}\left[\sup_{A \in \mathcal{C}(r, s:e, W)} \left|\langle \Sigma_{R,s:e}, A\rangle\right|\right],
	\end{align*}
    where the second identity follows from the definition of $\Sigma_{R,s:e}$ in \eqref{eq-Sigma-prime-maat-def}.
	For $A \in \mathcal{C}(r, e-s, W)$, we have that 
	\[
		\|A\|_* \leq \sqrt{r} \|A\|_{\mathrm{F}} \leq \sqrt{\frac{r}{q_1}} \|A\|_{L_2(Q), s:e} \leq \sqrt{\frac{rW}{q_1}}.
	\]
	Therefore, we have that 
	\[
		\mathbb{E}\left[Z_{W, e:s}\right] \leq \sqrt{\frac{rW}{q_1}}\mathbb{E}\left[\|\Sigma_{R, s:e}\|_{\mathrm{op}}\right].
	\]

\medskip	
\noindent\textbf{Step 3.}  Finally, using
    \begin{align*}
        \mathbb{E}\left[Z_{W, e:s}\right] + \frac{W}{27} \leq& \frac{54r}{q_1}\left(\mathbb{E}\left[\|\Sigma_{R, s:e}\|_{\mathrm{op}}\right]\right)^2 + \frac{W}{54\times 4} + \frac{W}{27}\\
        =& \frac{54r}{q_1}\left(\mathbb{E}\left[\|\Sigma_{R, s:e}\|_{\mathrm{op}}\right]\right)^2 +\frac{W}{24},
    \end{align*}
we have that
	\[
		\mathbb{P}\left(Z_{W, e:s} \geq \frac{W}{24} + \frac{54r}{q_1}\left(\mathbb{E}\left[\|\Sigma_{R, s:e}\|_{\mathrm{op}}\right]\right)^2\right) \leq 2 \exp(-C q_1(e-s)W).
	\]
\end{proof}

\begin{lemma}\label{lem-lem4-klopp_dep}
Suppose Assumption~\ref{assume-model-new} holds. For any integer pair $1 \leq s < e$, such that the interval length $(e-s)$ satisfying $$\frac{\sqrt{e-s}}{(\log(e-s))^2} \geq C\sqrt{n\log(n)}.$$ Then\\
(i) we have that
    \begin{align*}
    \mathbb{E}\left[\| \Sigma_{R,s:e}\|_{\mathrm{op}} \right]\leq C_1\sqrt{\frac{n\log(n)}{e-s}};
\end{align*} 
\\
(ii) we also have with probability at least $1 - \delta$ that
\begin{align*} \|\Sigma_{s:e}\|_{\mathrm{op}} \leq C_2\sqrt{\frac{n\{\log(n) + \log(1/\delta)\}}{e-s}} + C_3\sqrt{\vartheta}q_2n\sqrt{\frac{\log(n) + \log(1/\delta)}{e-s}},
 \end{align*}
\end{lemma}
\begin{proof}
(i) \textbf{Upper bound of $\mathbb{E}[\|\Sigma_{R, s:e}\|_{\mathrm{op}}]$}.
\\
Recall from \eqref{eq-Sigma-prime-maat-def} that
    \begin{align*}
        \Sigma_{R,s:e} =\frac{1}{e-s}\sum_{(i,j)}E_{ij}\sum_{t = s+1}^e\big\{\Omega_{ij}(t) - \Pi_{ij}(t)\big\}.
    \end{align*}
It follows from \Cref{assume-model-new} that, for each $t$, the upper diagonal entries of $\Omega(t)$ are independent. To apply \Cref{lemma:matrix_bernstein}, we verify its conditions. Note that
\[
\mathbb{E}[\Omega(t) - \Pi(t)] = 0,
\]
\[
\max_{t}\|\Omega(t) - \Pi(t)\|_{\op} \leq \max_{t}\|\Omega(t) - \Pi(t)\|_{\mathrm{F}} \leq n = D,
\]
and
\begin{align*}
    &\nu^2 = \max_{1 \leq s < e}\frac{1}{e-s}\left\|\mathbb{E}\left[\left(\sum_{t = s+1}^e\{\Omega(t) - \Pi(t)\}\right)\left(\sum_{t = s+1}^e\{\Omega(t) - \Pi(t)\}\right)^{\top}\right]\right\|_{\mathrm{op}}\\
    =&\max_{1 \leq s < e}\frac{1}{e-s}\sup_{\|v\| = 1}\mathbb{E}\left[\sum_{p = 1}^n\sum_{q=1}^n v_p\left\{\left(\sum_{t = s+1}^e\{\Omega(t) - \Pi(t)\}\right)\left(\sum_{t = s+1}^e\{\Omega(t) - \Pi(t)\}\right)^{\top}\right\}_{p,q}v_q\right]\\
    =&\max_{1 \leq s < e}\frac{1}{e-s}\sup_{\|v\| = 1}\mathbb{E}\Bigg[\sum_{p = 1}^n\sum_{q=1}^n v_p\Bigg\{\sum_{k=1}^n\left(\sum_{t = s+1}^e\{\Omega_{pk}(t) - \Pi_{pk}(t)\}\right)\left(\sum_{t = s+1}^e\{\Omega_{qk}(t) - \Pi_{qk}(t)\}\right)\Bigg\}v_q\Bigg]\\
        =&\max_{1 \leq s < e}\frac{1}{e-s}\sup_{\|v\| = 1}\mathbb{E}\Bigg[\sum_{k=1}^n\sum_{p = 1}^n v_p^2\left(\sum_{t = s+1}^e\{\Omega_{pk}(t) - \Pi_{pk}(t)\}\right)^2\Bigg]\\
        =&\max_{1 \leq s < e}\frac{1}{e-s}\sup_{\|v\| = 1}\sum_{k=1}^n\sum_{p = 1}^n v_p^2\mathbb{E}\Bigg[\left(\sum_{t = s+1}^e\{\Omega_{pk}(t) - \Pi_{pk}(t)\}\right)^2\Bigg]\\
        \leq&\max_{1 \leq s < e}\frac{n}{e-s}\sup_{\|v\| = 1}\sum_{p = 1}^n v_p^2\max_{p,k}\mathbb{E}\Bigg[\left(\sum_{t = s+1}^e\{\Omega_{pk}(t) - \Pi_{pk}(t)\}\right)^2\Bigg]\\
        \leq&\max_{1 \leq s < e}\frac{n}{e-s}\max_{p,k}\mathbb{E}\Bigg[\left(\sum_{t = s+1}^e\{\Omega_{pk}(t) - \Pi_{pk}(t)\}\right)^2\Bigg]\\
    \leq& n\max_{p,k}\sum_{\ell = -\infty}^{\infty}\cov\left(\Omega_{pk}(s) - \Pi_{pk}(s), \Omega_{pk}(s+\ell) - \Pi_{pk}(s+\ell)\right)\\
    =& C_{\Omega,\mathrm{LRV}}n,
\end{align*}
where the fourth inequality follows from
\[\mathbb{E}\left[\left(\sum_{t = s+1}^e \left\{\Omega_{pk}(t) - \Pi_{pk}(t)\right\}\right)\left(\sum_{t = s+1}^e \left\{\Omega_{qk}(t) - \Pi_{qk}(t)\right\}\right)\right] = 0,\]
for any $p \neq q$. The last equality follows from \Cref{assume-model-new}d, which implies for any $p,k = 1, \dots, n$ the long-run variance of $\{\Omega_{pk}(t)\}_{t \in \mathbb{N}^*}$ is finite, i.e. $$C_{\Omega,\mathrm{LRV}} = \max_{p,k}\sum_{\ell = -\infty}^{\infty}\cov\left(\Omega_{pk}(s) - \Pi_{pk}(s), \Omega_{pk}(s+\ell) - \Pi_{pk}(s+\ell)\right) < \infty.$$
Then, \Cref{lemma:matrix_bernstein} leads to
\begin{align*}
    &\mathbb{P}\left(\sqrt{\frac{e-s}{n}}\left\|\Sigma_{R,s:e}\right\|_{\mathrm{op}} \geq x\right)\\
    \leq& n\exp\left(-\frac{C(e-s)nx^2}{C_{\Omega,\mathrm{LRV}}n(e-s) + c^{-1}n + x\sqrt{n(e-s)}n(\log(e-s))^2}\right)\\
    \leq& n\exp\left(-C_1x^2\right) + n\exp\left(-C_2(e-s)x^2\right) + n\exp\left(-\frac{C_3\sqrt{e-s}x}{\sqrt{n}(\log(e-s))^2}\right),
\end{align*}
which further leads to
\begin{align*}
    &\mathbb{P}\left(\left\|\Sigma_{R,s:e}\right\|_{\mathrm{op}} \geq C\sqrt{\frac{n\log(n)}{e-s}}\right)
    \leq n^{-5},
\end{align*}
provided that the interval length $(e-s)$ satisfying $$\frac{\sqrt{e-s}}{(\log(e-s))^2} \geq \sqrt{n\log(n)}.$$
Since 
    \[
    \left\|\Sigma_{R,s:e}\right\|_{\mathrm{op}} \leq \frac{n}{e-s}\left\|\sum_{(i,j)}E_{ij}\sum_{t = s+1}^e\{\Omega_{ij}(t) - \Pi_{ij}(t)\}\right\|_{\max} \leq n.
    \]
    Taking integral of the tail probability bound leads to
\begin{align*}
    &\mathbb{E}\left[\left\|\Sigma_{R,s:e}\right\|_{\mathrm{op}}\right]
    \leq C_1 \sqrt{\frac{n\log(n)}{e-s}}(1-n^{-5}) + C_2n^{-4} \leq C\sqrt{\frac{n\log(n)}{e-s}},
\end{align*}
\\
\\
(ii) \textbf{Tail probability bound of $\|\Sigma_{s:e}\|_{\mathrm{op}}$}.
\\
Recall the definition \eqref{eq-Sigma-mat-def} that
	\begin{align*}
	    \Sigma_{s:e} = \frac{1}{e-s} \sum_{(i, j)}E_{ij} \sum_{t = s+1}^e \left(\Omega_{ij}(t)\big\{P_{ij}(t) + \xi_{ij}(t) - \Theta_{ij}(t)\big\}\right).
	\end{align*}
For any $t > 0$, it holds that
\begin{align*}
     &\mathbb P\left(\|\Sigma_{s:e}\|_{\mathrm{op}} \geq t \right)\\ \leq&
     \mathbb P\left(\left\|\frac{1}{e-s} \sum_{(i, j)}E_{ij} \sum_{t = s+1}^e \left(\Omega_{ij}(t) \{P_{ij}(t) - \Theta_{ij}(t)\}\right)\right\|_{\mathrm{op}} \geq \frac{2t}{3} \right)\\
     &+ \mathbb P\left(\left\|\frac{1}{e-s} \sum_{(i, j)}E_{ij} \sum_{t = s+1}^e \Omega_{ij}(t)\xi_{ij}(t)\right\|_{\mathrm{op}} \geq \frac{t}{3} \right)\\
     =&P\left(\left\|\frac{1}{e-s} \sum_{(i, j)}E_{ij} \sum_{t = s+1}^e \left(\{\Omega_{ij}(t) - \Pi_{ij}(t)\} \{P_{ij}(t) - \Theta_{ij}(t)\}\right)\right\|_{\mathrm{op}} \geq \frac{t}{3}\right)\\
     &+ P\left(\left\|\frac{1}{e-s} \sum_{(i, j)}E_{ij} \sum_{t = s+1}^e \left(\Pi_{ij}(t) \{P_{ij}(t) - \Theta_{ij}(t)\}\right)\right\|_{\mathrm{op}} \geq \frac{t}{3}\right)\\
     &+ \mathbb E\left[\mathbb P\left(\left\|\frac{1}{e-s} \sum_{(i, j)}E_{ij} \sum_{t = s+1}^e \Omega_{ij}(t)\xi_{ij}(t)\right\|_{\mathrm{op}} \geq \frac{t}{3} \Bigg| \{X_i(t)\}_{i = 1,t\in\mathbb{N}^*}^n\right)\right]\\
     =&I + II + III.
 \end{align*}
\\
\\
\textbf{Term $I$}.
\
\\
Under Assumption~\ref{assume-model-new}, $\{\Omega(t)\}_{t = s+1}^e$ and $\{P(t)\}_{t = s+1}^e$ are sequences of symmetric matrices and are mutually independent. Moreover, for each $t$, the upper diagonal entries of $\Omega(t)$ are independent across $i,j$. 
\
\\
To apply \Cref{lemma:matrix_bernstein}, we verify its conditions. For any $t = s+1, \dots, e$, let $M(t) = \{\Omega(t) - \Pi(t)\} \odot \{P(t) - \Theta(t)\}$. Note that
\[
\mathbb{E}[M(t)] = 0,
\]
\[
\max_{t}\|M(t)\|_{\op} \leq \max_{t}\|M(t)\|_{\mathrm{F}} \leq n\vartheta,
\]
and following the same arguments as in the part (i),
\begin{align*}
    &\nu^2 = \max_{1 \leq s < e}\frac{1}{e-s}\left\|\mathbb{E}\left[\left(\sum_{t = s+1}^eM(t)\right)\left(\sum_{t = s+1}^eM(t)\right)^{\top}\right]\right\|_{\mathrm{op}}\\
    =&\max_{1 \leq s < e}\frac{1}{e-s}\sup_{\|v\| = 1}\mathbb{E}\left[\sum_{p = 1}^n\sum_{q=1}^n v_p\left\{\left(\sum_{t = s+1}^eM(t)\right)\left(\sum_{t = s+1}^eM(t)\right)^{\top}\right\}_{p,q}v_q\right]\\
    =&\max_{1 \leq s < e}\frac{1}{e-s}\sup_{\|v\| = 1}\mathbb{E}\Bigg[\sum_{k=1}^n\sum_{p = 1}^n\sum_{q=1}^n v_p\Bigg\{\left(\sum_{t = s+1}^e\{\Omega_{pk}(t)-\Pi_{pk}(t)\}\{(X_p(t))^{\top}X_k(t)-\Theta_{pk}(t)\}\right)\\
    &\quad\quad\quad\quad\quad\quad\quad\quad\quad\quad\quad\quad\quad\left(\sum_{t = s+1}^e\{\Omega_{qk}(t) - \Pi_{qk}(t)\}\{(X_q(t))^{\top}X_k(t)-\Theta_{qk}(t)\}\right)\Bigg\}v_q\Bigg]\\
    =&\max_{1 \leq s < e}\frac{1}{e-s}\sup_{\|v\| = 1}\mathbb{E}\Bigg[\sum_{k=1}^n\sum_{p = 1}^n v_p^2\left(\sum_{t = s+1}^e\{\Omega_{pk}(t)-\Pi_{pk}(t)\}\{(X_p(t))^{\top}X_k(t)-\Theta_{pk}(t)\}\right)^2\Bigg]\\
    \leq&\max_{1 \leq s < e}\frac{n}{e-s}\max_{p,k}\mathbb{E}\left[\left(\sum_{t = s+1}^e\{\Omega_{pk}(t)-\Pi_{pk}(t)\}\{(X_p(t))^{\top}X_k(t)-\Theta_{pk}(t)\}\right)^2\right]\\
    =&n\max_{1 \leq s < e}\max_{p,k}\sum_{\ell = -(e-s)}^{e-s}\cov\Big(\{\Omega_{pk}(s)-\Pi_{pk}(s)\}\{(X_p(s))^{\top}X_k(s)-\Theta_{pk}(s)\},\\
    &\quad\quad\quad\quad\quad\quad\quad\{\Omega_{pk}(s+\ell)-\Pi_{pk}(s+\ell)\}\{(X_p(s+\ell))^{\top}X_k(s+\ell)-\Theta_{pk}(s+\ell)\}\Big)\\
    \leq& C_{\Omega P,\mathrm{LRV}}n,
\end{align*}
where the last equality follows from \Cref{assume-model-new}b and d, which imply for any $p,k = 1, \dots, n$ the long-run variance of $\big\{\{\Omega_{pk}(t)-\Pi_{pk}(t)\}\{(X_{p}(t))^{\top}X_k(t) -\Theta_{pk}(t)\}\big\}_{t \in \mathbb{N}^*}$ is finite, i.e. 
\begin{align*}
    &C_{\Omega P,\mathrm{LRV}} = \max_{p,k}\sum_{\ell = -(e-s)}^{e-s}\cov\Big(\{\Omega_{pk}(s)-\Pi_{pk}(s)\}\{(X_p(s))^{\top}X_k(s)-\Theta_{pk}(s)\},\\
    &\quad\quad\quad\quad\quad\quad\quad\{\Omega_{pk}(s+\ell)-\Pi_{pk}(s+\ell)\}\{(X_p(s+\ell))^{\top}X_k(s+\ell)-\Theta_{pk}(s+\ell)\}\Big) < \infty.
\end{align*}
Then, \Cref{lemma:matrix_bernstein} leads to
\begin{align*}
    &P\left(\frac{1}{\sqrt{(e-s)n\vartheta }} \left\|\sum_{(i, j)}E_{ij} \sum_{t = s+1}^e \left(\{\Omega_{ij}(t) - \Pi_{ij}(t)\} \{P_{ij}(t) - \Theta_{ij}(t)\}\right)\right\|_{\mathrm{op}} \geq \frac{t}{3}\right)\\
    \leq& n\exp\left(-\frac{C(e-s)n\vartheta t^2}{C_{\Omega P,\mathrm{LRV}}n(e-s) + c^{-1}n\vartheta + t\sqrt{n\vartheta(e-s)}n\vartheta(\log(e-s))^2}\right)\\
    \leq& n\exp\left(-C_1\vartheta t^2\right) + n\exp\left(-C_2(e-s)t^2\right) + n\exp\left(-\frac{C_3\sqrt{e-s}t}{\sqrt{n\vartheta }(\log(e-s))^2}\right),
\end{align*}
which further leads to
\begin{align*}
    &\mathbb{P}\left(\frac{1}{e-s}\left\|\sum_{(i, j)}E_{ij} \sum_{t = s+1}^e \left(\{\Omega_{ij}(t) - \Pi_{ij}(t)\} \{P_{ij}(t) - \Theta_{ij}(t)\}\right)\right\|_{\mathrm{op}} \geq C_1\sqrt{\frac{n\{\log(n)+\log(1/\delta)\}}{e-s}}\right)
    \leq \delta/3,
\end{align*}
provided that the interval length $(e-s)$ satisfying $$\frac{\sqrt{e-s}}{(\log(e-s))^2} \geq C_2\sqrt{n\log(n)}\vartheta .$$
\\
\\
\textbf{Term $II$}.
To apply \Cref{lemma:matrix_bernstein}, we verify its conditions. For any $t = s+1, \dots, e$, let $M(t) = \Pi(t) \odot \{P(t) - \Theta(t)\}$. Note that
\[
\mathbb{E}[M(t)] = 0,
\]
\[
\max_{t}\|M(t)\|_{\op} \leq \max_{t}\|M(t)\|_{\mathrm{F}} \leq n\vartheta ,
\]
and following the same arguments as in the part (i),
\begin{align*}
    &\nu^2 = \max_{1 \leq s < e}\frac{1}{e-s}\left\|\mathbb{E}\left[\left(\sum_{t = s+1}^eM(t)\right)\left(\sum_{t = s+1}^eM(t)\right)^{\top}\right]\right\|_{\mathrm{op}}\\
    =&\max_{1 \leq s < e}\frac{1}{e-s}\sup_{\|v\| = 1}\mathbb{E}\left[\sum_{p = 1}^n\sum_{q=1}^n v_p\left\{\left(\sum_{t = s+1}^eM(t)\right)\left(\sum_{t = s+1}^eM(t)\right)^{\top}\right\}_{p,q}v_q\right]\\
    =&\max_{1 \leq s < e}\frac{1}{e-s}\sup_{\|v\| = 1}\mathbb{E}\Bigg[\sum_{k=1}^n\sum_{p = 1}^n\sum_{q=1}^n v_p\Bigg\{\left(\sum_{t = s+1}^e\Pi_{pk}(t)\{(X_p(t))^{\top}X_k(t)-\Theta_{pk}(t)\}\right)\\
    &\quad\quad\quad\quad\quad\quad\quad\quad\quad\quad\quad\quad\quad\left(\sum_{t = s+1}^e\Pi_{qk}(t)\{(X_q(t))^{\top}X_k(t)-\Theta_{qk}(t)\}\right)\Bigg\}v_q\Bigg]\\
    \leq&\max_{1 \leq s < e}\frac{q_2^2}{e-s}\sup_{\|v\| = 1}\mathbb{E}\Bigg[\sum_{k=1}^n\sum_{p = 1}^n\sum_{q=1}^n v_p\Bigg\{\left(\sum_{t = s+1}^e\{(X_p(t))^{\top}X_k(t)-\Theta_{pk}(t)\}\right)\\
    &\quad\quad\quad\quad\quad\quad\quad\quad\quad\quad\quad\quad\quad\left(\sum_{t = s+1}^e\{(X_q(t))^{\top}X_k(t)-\Theta_{qk}(t)\}\right)\Bigg\}v_q\Bigg]\\
    =&\max_{1 \leq s < e}\frac{q_2^2}{e-s}\sup_{\|v\| = 1}\mathbb{E}\Bigg[\sum_{k=1}^n\sum_{p = 1}^n v_p^2\left(\sum_{t = s+1}^e\{(X_p(t))^{\top}X_k(t)-\Theta_{pk}(t)\}\right)^2\Bigg]\\
    &+\max_{1 \leq s < e}\frac{q_2^2\vartheta }{e-s}\sup_{\|v\| = 1}\mathbb{E}\Bigg[\sum_{k=1}^n\sum_{p = 1}^n\sum_{q \neq p} v_p\left(\sum_{t = s+1}^e(X_k(t) - \mu_k(t))\right)^{\top}\left(\sum_{t = s+1}^e(X_k(t) - \mu_k(t))\right)v_q\Bigg]\\
    =&\max_{1 \leq s < e}\frac{q_2^2}{e-s}\sup_{\|v\| = 1}\mathbb{E}\Bigg[\sum_{k=1}^n\sum_{p = 1}^n v_p^2\left(\sum_{t = s+1}^e\{(X_p(t))^{\top}X_k(t)-\Theta_{pk}(t)\}\right)^2\Bigg]\\
    &+\max_{1 \leq s < e}\frac{q_2^2\vartheta }{e-s}\sum_{k=1}^n\mathbb{E}\Bigg[\left\|\sum_{t = s+1}^e(X_k(t) - \mu_k(t))\right\|_2^2\Bigg]\sup_{\|v\| = 1}\sum_{p = 1}^n\sum_{q \neq p} v_pv_q\\
    \leq&\max_{1 \leq s < e}\frac{q_2^2n}{e-s}\max_{p,k}\mathbb{E}\Bigg[\left(\sum_{t = s+1}^e\{(X_p(t))^{\top}X_k(t)-\Theta_{pk}(t)\}\right)^2\Bigg]\\
    &+\max_{1 \leq s < e}\frac{q_2^2\vartheta n(n-1)}{e-s}\max_{k}\mathbb{E}\Bigg[\left\|\sum_{t = s+1}^e(X_k(t) - \mu_k(t))\right\|_2^2\Bigg]\\
    \leq& C_{ P,\mathrm{LRV}}q_2^2n + C_{X,\mathrm{LRV}}q_2^2\vartheta n(n-1),
\end{align*}
where the fourth equality follows from for any $p \neq q$ and $t_1, t_2$,
\begin{align*}
    &\mathbb{E}\left[\{(X_p(t_1))^{\top}X_k(t_1)-\Theta_{pk}(t_1)\}\{(X_q(t_2))^{\top}X_k(t_2)-\Theta_{qk}(t_2)\}\right]\\
    =&\mathbb{E}\left[\{(X_p(t_1))^{\top}X_k(t_1)-(\mu_{p}(t_1))^{\top}\mu_{k}(t_1)\}\{(X_q(t_2))^{\top}X_k(t_2)-(\mu_{q}(t_2))^{\top}\mu_k(t_2)\}\right]\\
    =&\mathbb{E}\Big[\{(X_p(t_1) - \mu_p(t_1))^{\top}(X_k(t_1)-\mu_k(t_1)) + (X_k(t_1) - \mu_k(t_1))^{\top}\mu_p(t_1) + (X_p(t_1) - \mu_p(t_1))^{\top}\mu_k(t_1)\}\\
    &\quad\quad\{(X_q(t_2) - \mu_q(t_2))^{\top}(X_k(t_2)-\mu_k(t_2)) + (X_k(t_2) - \mu_k(t_2))^{\top}\mu_q(t_2) + (X_q(t_2) - \mu_q(t_2))^{\top}\mu_k(t_2)\}\Big]\\
    =&\mathbb{E}\Big[(X_p(t_1) - \mu_p(t_1))^{\top}(X_k(t_1)-\mu_k(t_1))(X_q(t_2) - \mu_q(t_2))^{\top}(X_k(t_2)-\mu_k(t_2))\Big]\\
    &+
    \mathbb{E}\big[(X_k(t_1) - \mu_k(t_1))^{\top}\mu_p(t_1)(X_k(t_2) - \mu_k(t_2))^{\top}\mu_q(t_2)\big]\\
    =& \mathbb{E}\big[(X_k(t_1) - \mu_k(t_1))^{\top}\mu_p(t_1)(X_k(t_2) - \mu_k(t_2))^{\top}\mu_q(t_2)\big]\\
    \leq& \vartheta \mathbb{E}\big[(X_k(t_1) - \mu_k(t_1))^{\top}(X_k(t_2) - \mu_k(t_2))\big].
\end{align*}
Moreover, the last inequality follows from \Cref{assume-model-new}b, which implies for any $p,k = 1, \dots, n$ the long-run variance of $\{(X_p(t))^{\top}X_k(t)\}_{t \in \mathbb{N}^*}$ is finite, i.e. $$C_{P,\mathrm{LRV}} = \max_{p,k}\sum_{\ell = -\infty}^{\infty}\cov\left((X_p(t))^{\top}X_k(t), (X_p(t+\ell))^{\top}X_k(t+\ell)\right) < \infty,$$
and for any $p = 1, \dots, n$ the long-run variance of $\{X_p(t)\}_{t \in \mathbb{N}^*}$ is finite, i.e. $$C_{X,\mathrm{LRV}} = \max_{p,k}\sum_{\ell = -\infty}^{\infty}\mathbb{E}\left[(X_p(s) - \mu_p(s))^{\top}(X_p(s+\ell) - \mu_p(s+\ell))\right] < \infty,$$
and $\mu_p(t) = \mathbb{E}[X_p(t)] \in \mathcal{X}_F$ defined in \Cref{def-inner-product-dist}.
Then, \Cref{lemma:matrix_bernstein} leads to
\begin{align*}
    &P\left(\frac{1}{\sqrt{(e-s)\vartheta }q_2n} \left\|\sum_{(i, j)}E_{ij} \sum_{t = s+1}^e \left(\Pi_{ij}(t)\{P_{ij}(t) - \Theta_{ij}(t)\}\right)\right\|_{\mathrm{op}} \geq \frac{t}{3}\right)\\
    \leq& n\exp\left(-\frac{C(e-s)n^2q_2^2\vartheta t^2}{C_{ P,\mathrm{LRV}}q_2^2n(e-s) + C_{X,\mathrm{LRV}}q_2^2\vartheta n(n-1)(e-s) + c^{-1}n\vartheta  + t\sqrt{\vartheta (e-s)}n^2q_2\vartheta (\log(e-s))^2}\right)\\
    \leq& n\exp\left(-C_1n\vartheta t^2\right) + n\exp\left(-C_2t^2\right) + n\exp\left(-C_3nq_2^2(e-s)t^2\right) + n\exp\left(-\frac{C_4q_2\sqrt{e-s}t}{\sqrt{\vartheta }(\log(e-s))^2}\right),
\end{align*}
which further leads to
\begin{align*}
    &\mathbb{P}\left(\frac{1}{e-s}\left\|\sum_{(i, j)}E_{ij} \sum_{t = s+1}^e \Pi_{ij}(t) \{P_{ij}(t) - \Theta_{ij}(t)\}\right\|_{\mathrm{op}} \geq C_1\sqrt{\vartheta }q_2n\sqrt{\frac{\log(n) + \log(1/\delta)}{e-s}}\right)
    \leq \delta/3,
\end{align*}
provided that the interval length $(e-s)$ satisfying $$\frac{\sqrt{e-s}}{(\log(e-s))^2} \geq C_2\frac{\sqrt{\vartheta \log(n)}}{q_2}.$$
\\
\\
\textbf{Term $III$}.
\
\\
Under Assumption~\ref{assume-model-new}, $\{\Omega(t)\}_{t = s+1}^e$ and $\{\xi(t)\}_{t = s+1}^e$ are sequences of symmetric matrices and are mutually independent. Moreover, for each $t$, the upper diagonal entries of $\Omega(t)$ are independent across $i,j$. Conditioning on the latent positions $\{X_i(t)\}_{i = 1, t \in \mathbb{N}^*}^n$, for each $t$, the upper diagonal entries of $\xi(t)$ are independent across $i,j$.
\
\\
To apply \Cref{lemma:matrix_bernstein}, we verify its conditions. For any $t = s+1, \dots, e$, let $M(t) = \Omega(t)\odot \xi(t)$. Note that
\[
\mathbb{E}[M(t)|\{X_i(t)\}_{i = 1, t \in \mathbb{N}^*}^n] = 0,
\]
\[
\max_{t}\|M(t)|\{X_i(t)\}_{i = 1, t \in \mathbb{N}^*}^n\|_{\op} \leq \max_{t}\|M(t)|\{X_i(t)\}_{i = 1, t \in \mathbb{N}^*}^n\|_{\mathrm{F}} \leq n,
\]
and following the same arguments as in the part (i),
\begin{align*}
    &\nu^2 = \max_{1 \leq s < e}\frac{1}{e-s}\left\|\mathbb{E}\left[\left(\sum_{t = s+1}^eM(t)\right)\left(\sum_{t = s+1}^eM(t)\right)^{\top}\Bigg|\{X_i(t)\}_{i = 1, t \in \mathbb{N}^*}^n\right]\right\|_{\mathrm{op}}\\
    =&\max_{1 \leq s < e}\frac{1}{e-s}\sup_{\|v\| = 1}\mathbb{E}\Bigg[\sum_{k=1}^n\sum_{p = 1}^n\sum_{q=1}^n v_p\Bigg\{\left(\sum_{t = s+1}^e\Omega_{pk}(t)\xi_{pk}(t)\right)\left(\sum_{t = s+1}^e\Omega_{qk}(t)\xi_{qk}(t)\right)\Bigg\}v_q\Bigg|\{X_i(t)\}_{i = 1, t \in \mathbb{N}^*}^n\Bigg]\\
    =&\max_{1 \leq s < e}\frac{1}{e-s}\sup_{\|v\| = 1}\mathbb{E}\Bigg[\sum_{k=1}^n\sum_{p = 1}^n v_p^2\left(\sum_{t = s+1}^e\Omega_{pk}(t)\xi_{pk}(t)\right)^2\Bigg|\{X_i(t)\}_{i = 1, t \in \mathbb{N}^*}^n\Bigg]\\
    \leq&\max_{1 \leq s < e}\frac{n}{e-s}\max_{p,k}\mathbb{E}\left[\left(\sum_{t = s+1}^e\Omega_{pk}(t)\xi_p(t)\right)^2\Bigg|\{X_i(t)\}_{i = 1, t \in \mathbb{N}^*}^n\right]\\
    \leq& C_{\Omega Y,\mathrm{LRV}}n,
\end{align*}
where the last equality follows from \Cref{assume-model-new}b and d, which imply for any $p,k = 1, \dots, n$ the long-run variance conditioning on $\{X_i(t)\}_{i = 1, t \in \mathbb{N}^*}^n$ of $\big\{\Omega_{pk}(t)\xi_{pk}(t)\big\}_{t \in \mathbb{N}^*}$ is finite, i.e. 
\begin{align*}
    &C_{\Omega Y,\mathrm{LRV}} = \max_{p,k}\sum_{\ell = -(e-s)}^{e-s}\cov\Big(\Omega_{pk}(s)\xi_{pk}(s),\Omega_{pk}(s+\ell)\xi_{pk}(s+\ell) \Big|\{X_i(t)\}_{i = 1, t \in \mathbb{N}^*}^n\Big) < \infty.
\end{align*}
Then, \Cref{lemma:matrix_bernstein} leads to
\begin{align*}
    &P\left(\frac{1}{\sqrt{(e-s)n}} \left\|\sum_{(i, j)}E_{ij} \sum_{t = s+1}^e \Omega_{ij}(t)\xi_{ij}(t)\right\|_{\mathrm{op}} \geq \frac{t}{3}\Bigg|\{X_i(t)\}_{i = 1, t \in \mathbb{N}^*}^n\right)\\
    \leq& n\exp\left(-\frac{C(e-s)nt^2}{C_{\Omega Y,\mathrm{LRV}}n(e-s) + c^{-1}n + t\sqrt{n(e-s)}n(\log(e-s))^2}\right)\\
    \leq& n\exp\left(-C_1t^2\right) + n\exp\left(-C_2(e-s)t^2\right) + n\exp\left(-\frac{C_3\sqrt{e-s}t}{\sqrt{n}(\log(e-s))^2}\right),
\end{align*}
which further leads to
\begin{align*}
    &\mathbb{P}\left(\frac{1}{e-s}\left\|\sum_{(i, j)}E_{ij} \sum_{t = s+1}^e \Omega_{ij}(t)\xi_{ij}(t)\right\|_{\mathrm{op}} \geq C_1\sqrt{\frac{n\{\log(n) + \log(1/\delta)\}}{e-s}}\right)\\
    &=\mathbb{E}\left[\mathbb{P}\left(\frac{1}{e-s}\left\|\sum_{(i, j)}E_{ij} \sum_{t = s+1}^e \Omega_{ij}(t)\xi_{ij}(t)\right\|_{\mathrm{op}} \geq C_1\sqrt{\frac{n\{\log(n) + \log(1/\delta)\}}{e-s}}\Bigg|\{X_i(t)\}_{i = 1, t \in \mathbb{N}^*}^n \right)\right]
    \leq \delta/3,
\end{align*}
provided that the interval length $(e-s)$ satisfying $$\frac{\sqrt{e-s}}{(\log(e-s))^2} \geq C_2\sqrt{n\log(n)}.$$
\\
\\
Combining the above three terms, we have with probability at least $1 - \delta$ that
\begin{align*}
\|\Sigma_{s:e}\|_{\mathrm{op}} \leq C_1\sqrt{\frac{n\{\log(n) + \log(1/\delta)\}}{e-s}} + C_2\sqrt{\vartheta }q_2n\sqrt{\frac{\log(n) + \log(1/\delta)}{e-s}},
 \end{align*}
 provided that the interval length $(e-s)$ satisfying $$\frac{\sqrt{e-s}}{(\log(e-s))^2} \geq C_3\sqrt{n\log(n)}.$$
\end{proof}

\end{document}